\newcommand{\NUMCAND}{\ensuremath{n}\xspace}
\newcommand{\COST}{\ensuremath{\bm{c}}\xspace}
\newcommand{\Cost}[1]{\ensuremath{c_{#1}}\xspace}
\newcommand{\CostP}[1]{\ensuremath{c'_{#1}}\xspace}
\newcommand{\CPROB}{\ensuremath{\bm{p}}\xspace}
\newcommand{\CProb}[1]{\ensuremath{p_{#1}}\xspace}
\newcommand{\CProbP}[1]{\ensuremath{p'_{#1}}\xspace}
\newcommand{\VPROB}{\ensuremath{\bm{q}}\xspace}
\newcommand{\VProb}[1]{\ensuremath{q_{#1}}\xspace}
\newcommand{\DIST}{\ensuremath{\mathcal{D}}\xspace}
\newcommand{\Dist}[2]{\ensuremath{d_{#1,#2}}\xspace}
\newcommand{\WINNER}{\ensuremath{w}\xspace}
\newcommand{\Winner}[2]{\ensuremath{\WINNER(#1,#2)}\xspace}
\newcommand{\OPT}{\ensuremath{o}\xspace}
\newcommand{\Opt}[2]{\ensuremath{\OPT(#1,#2)}\xspace}
\newcommand{\Social}[3]{\ensuremath{C\left(#1,#2,#3\right)}\xspace}
\newcommand{\CSOC}{\ensuremath{\hat{C}}}
\newcommand{\CSoc}[3][]{\ensuremath{\ifthenelse{\equal{#1}{}}{\CSOC(#2,#3,\alpha)}{\CSOC_{#2,#3,\alpha}(#1)}}\xspace}
\newcommand{\Last}[1]{\ensuremath{\ell_{#1}}\xspace}
\newcommand{\Ratio}[2]{\ensuremath{\ifthenelse{\equal{#2}{}}{r_{#1}}{r_{#1,#2}}}\xspace}
\newcommand{\RatioP}[2]{\ensuremath{\ifthenelse{\equal{#2}{}}{r'_{#1}}{r'_{#1,#2}}}\xspace}
\newcommand{\Radius}[1]{\ensuremath{\rho_{#1}}\xspace}
\newtheorem{example}[lemma]{Example}
\renewcommand{\hat}{\widehat}
\newcommand\eps{\epsilon}
\newcommand{\expect}[2]{\Expect[#1]{#2}}
\begin{document}

\title{Of the People: Voting Is More Effective \\ with Representative Candidates}
\author{Yu Cheng \qquad Shaddin Dughmi \qquad David Kempe \\
Dept.~of Computer Science, University of Southern California}
\date{}

\maketitle

\begin{abstract}
In light of the classic impossibility results of Arrow and
Gibbard and Satterthwaite regarding voting with ordinal rules,
there has been recent interest in characterizing how well common
voting rules approximate the social optimum.
In order to quantify the quality of approximation, it is natural to
consider the candidates and voters as embedded within a common metric
space, and to ask how much further the chosen candidate is from the
population as compared to the socially optimal one.
We use this metric preference model to explore a fundamental and
timely question: 
does the social welfare of a population improve when candidates are
representative of the population?
If so, then by how much, and how does the answer depend on the
complexity of the metric space?

We restrict attention to the most fundamental and common social choice
setting: a population of voters, two candidates,
and a majority rule election.
When candidates are not representative of the population, it is known
that the candidate selected by the majority rule can be thrice as far
from the population as the socially optimal one; 
this holds even when the underlying metric is a line.
We examine how this ratio improves when candidates are drawn
independently from the population of voters.
Our results are two-fold: When the metric is a line, the ratio
improves from $3$ to $(4-2\sqrt{2}) \approx 1.1716$; this bound is tight. 
When the metric is arbitrary, we show a lower bound of 1.5 and a
constant upper bound strictly better than 2 on the distortion
of majority rule.

The aforementioned positive results depend in part on the assumption
that the two candidates are independently and identically distributed.
However, we show that i.i.d.~candidates do not suffice for our upper bounds: 
if the population of candidates can be different from that of voters, 
an upper bound of 2 on the distortion is tight for both general metric
spaces and the line.
Thus, we show a constant gap between representative and
non-representative candidates in both cases.
The exact size of this gap in general metric spaces is a natural open question.
\end{abstract}


\section{Introduction}

\epigraph{\it ``[...] and that government of the people, by the people, for the
people, shall not perish from the earth.''}{--- \textup{Abraham Lincoln}}


Abraham Lincoln's Gettysburg Address culminated with the oft-quoted
words above.
This single sentence gives a remarkably succinct summary of the role of
a country's populace in a participatory democracy, identifying three
distinct facets:
(1) The government should be \emph{of} the people: 
the members of the government should be drawn from --- and by
inference representative of --- the country's populace.
(2) The government should be \emph{by} the people: 
decisions should be made by the populace.
(3) The government should be \emph{for} the people: 
its objective should be to serve the interests of the populace.
In Lincoln's words, the central question we study here is the following:

\begin{quote} 
If a government by the people is to be for the people, 
how important is it that it also be \emph{of} the people?
\end{quote}

In quantifying this question, we observe that there is a surprisingly
clean mapping of Lincoln's vision onto central concepts of social
choice theory:

\begin{enumerate}
\item Who is the government of? 
  Who are the candidates (people or ideas) to be aggregated?
\item Who is the government by? 
  What are the social choice rules used for aggregation?
\item Who is the government for? 
  What objective function is to be optimized?
\end{enumerate}

While the exact social choice rules to be used have been a topic of
vigorous debate for several centuries
\cite{borda:elections,condorcet:essay,arrow:social-choice,BCULP:social-choice},
the broad class they are drawn from is generally agreed upon:
voters provide an ordinal ranking of (a subset of) the candidates, and
these rankings are then aggregated to produce either a single winner
or a consensus ranking of all (or some) candidates.
Social choice is limited by the severe impossibility results
of Arrow~\cite{arrow:social-choice} and Gibbard and Satterthwaite
\cite{gibbard:manipulation,satterthwaite:voting}, 
establishing that even very simple
combinations of desired axioms are in general unachievable.
These impossibility results in turn have resulted in a fruitful line
of work exploring restrictions on individuals' preference orders for
circumventing the impossibility of social choice.

One of the avenues toward circumventing the impossibility results
simultaneously doubles as a framework for addressing the third
question: 
What objective function is to be optimized by the social choice rule?
The key modeling assumption is that all candidates (ideas or people)
and voters are embedded in a metric space: small distances model
high agreement, while large distances correspond to disagreement 
\cite{black:rationale,downs:democracy,black:committees-elections,moulin:single-peaked,merrill:grofman,barbera:gul:stacchetti,richards:richards:mckay,barbera:social-choice}.
The metric induces a preference order over
  candidates for each voter:
she simply ranks candidates by distance from herself.
When the metric space is specifically the line, we obtain the
  well-known and much studied special case of single-peaked
  preferences \cite{black:rationale,moulin:single-peaked}.
Embedding voters and candidates in a metric space has
historically served two purposes: 
(1) Restricting the metric space --- for example, by limiting
its dimension --- defines a restricted class of ordinal preference
profiles, and might help circumvent
the classic impossibility results of social choice.
(2) The distances naturally provide an objective function: 
the \emph{best} alternative is the one that is closest to the voters
on average.  Even when the metric space is unrestricted, replacing the
hard axioms of social choice theory with this objective function can
``circumvent'' impossibility results through approximation
\cite{procaccia:approximation:gibbard}, and permits comparing different
social rules by quantifying their worst-case performance.  

While distances yield cardinal preferences and a social objective
function, it is arguably unrealistic to expect individuals to
articulate distances accurately. 
It is consequently unsurprising that common and well-established
voting rules typically restrict voters to providing ordinal
information, such as rankings or a single vote. 
Therefore, we view the metric space as implicit, and a social
choice function as optimizing the associated cardinal objective
function \emph{using only ordinal information}. 


This viewpoint was recently crisply expressed in a sequence of works
originating with Anshelevich et
al.~\cite{anshelevich:bhardwaj:postl,anshelevich:postl:randomized,anshelevich:sekar:blind,anshelevich:ordinal,goel:krishnaswamy:munagala}.
In particular, Anshelevich et al.~\cite{anshelevich:bhardwaj:postl}
examine many of the most widely used election
voting rules, guided by the question: ``How much worse is the outcome
of voting than would be the omniscient choice of the best available candidate?''
They showed remarkable separations: while some voting rules guarantee
a distortion of no more than a constant factor, others are off by a
factor that increases linearly in the number of candidates or --- even
worse --- voters.
The simplest, and in some sense canonical, example of such
distortion is captured as follows:

\begin{example} \label{ex:line-three}
A population consists of voters of whom just below half
lean solidly left (at position $-1$), 
while just over half are just to the right of center (at position
$\epsilon > 0$). 
The population conducts an election between a solidly
left-wing (position $-1$) and a solidly right-wing (position 1)
candidate. 
\begin{figure}[h]
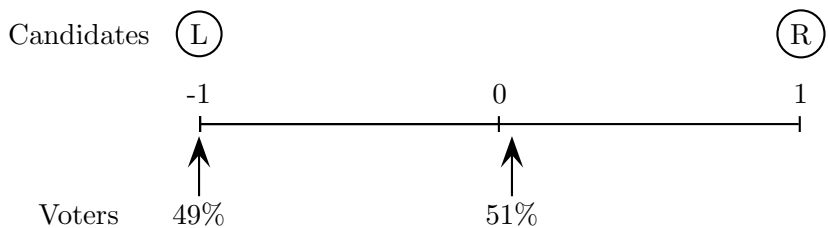

\centering
\psset{unit=0.8cm,arrowsize=0.2 5}
\pspicture(-7,-2)(6,2)
\psline{|-|}(-5,0)(5,0)
\psline{|-|}(-5,0)(0,0)

\rput(-5,0.5){-1}
\rput(5,0.5){1}
\rput(0,0.5){0}

\rput(-5,1.5){\pscirclebox{L}}
\rput(5,1.5){\pscirclebox{R}}

\psline{->}(-5,-1.2)(-5,-0.2)
\rput(-5,-1.5){49\%}
\psline{->}(0.2,-1.2)(0.2,-0.2)
\rput(0.2,-1.5){51\%}

\rput(-7,-1.5){Voters}
\rput(-7,1.5){Candidates}
\endpspicture

\caption{The winning candidate could have thrice the social cost of the other candidate.}
\label{fig:thrice-example}
\end{figure}

Because the centrist voters express their (slight) preference
for the right-wing candidate, he is elected by a small
majority. 
However, the average distance from the population to the
right-wing candidate (1.5) is thrice that to the left-wing
candidate (0.5), meaning that the majority vote led to a loss of a
factor three in the utility.
\end{example}

We follow prior nomenclature in this domain~
\cite{procaccia:rosenschein:distortion,BCHLPS:utilitarian:distortion,caragiannis:procaccia:voting,boutilier:rosenschein:incomplete,anshelevich:bhardwaj:postl}
and term this utility loss the \emph{distortion}.
In examining Example~\ref{ex:line-three} more closely, 
we identify a likely culprit for the high distortion: 
the right-wing candidate was not representative of the population
--- he was not \emph{of} the people. 
Had we drawn two candidates from the population, the winner would
in fact always be the socially optimal choice in this case.
If we wanted to create the possibility of recreating the above example, 
we would need to move some fraction $\delta$ of the
population to the right wing.
If $\delta$ were large, then the election of a right-wing candidate
would not be nearly as bad according to the objective function;  
conversely, if $\delta$ were small, then it would be unlikely that a
right-wing candidate would run, so most of the time, the social choice
rule would select an optimal candidate.
Thus, intuitively, when candidates are drawn from the population, 
we would expect the distortion in the social cost to be better than
when they are not.
The goal of this article is to investigate to what extent this
intuition holds.

\subsection*{The Model}
Formally, we assume that the candidates and voters are jointly located
in a (finite) metric space; the distance between $i$ and $j$ is
denoted by \Dist{i}{j}.
The candidates' locations are given by a probability distribution \CPROB, 
while the voters' location distribution is denoted by \VPROB.
In order to isolate the driving question and side-step issues of
specific voting rules, we focus on the simplest social choice
scenario: two candidates $i,i'$ are drawn i.i.d.~from \CPROB,
and a simple majority vote determines the winner between them.
Voter $j$ votes for the one of $i,i'$ who is closer\footnote{%
\label{ftn:tie-breaking}
  Throughout, we will assume when convenient that the metric
    and distribution are in general position. Specifically, there are no
    ties in any voter's preference order,
   and there are no ties in any election outcome.
   Ties could in principle be dealt with using suitable tie breaking
   rules, but the slight gain in generality would not be worth the overhead.}
to $j$.
The \emph{social cost} of candidate $i$ is 
$\Cost{i} = \sum_j \VProb{j} \Dist{i}{j}$.
With $\Winner{i}{i'}$ denoting the winner of the election
and $\Opt{i}{i'}$ the socially optimal candidate, 
the expected distortion of voting is
$\sum_{i,i'} \CProb{i} \CProb{i'} \frac{\Cost{\Winner{i}{i'}}}{\Cost{\Opt{i}{i'}}}$.
Our goal is then to understand whether and by how much the distortion
decreases when candidates are of the people (when $\CPROB = \VPROB$).

\subsection*{Our Results}
We begin our investigation with arguably the simplest metric space,
which nonetheless is frequently used to describe the political
spectrum of countries: the line.
As we saw in Example~\ref{ex:line-three}, even for the line, voting
between two arbitrary candidates can lead to a distortion of 3.
Our first main result (proved in Section~\ref{sec:line}) is that when
two candidates are drawn i.i.d.~from 
$\CPROB = \VPROB$, the expected distortion is at most
$4-2\sqrt{2} \approx 1.1716$, and this bound is tight.
The lower-bound example is in fact of the type discussed after
Example~\ref{ex:line-three}, obtained by moving a suitable
population mass $\delta$ from location $\epsilon$ to
location 1.
The more difficult part of the proof is the upper bound, and in
particular, the proof that the worst-case distribution of
voters/candidates always has support size 3.
The proof proceeds by showing that for larger support sizes, there is
always a sequence of alterations that gradually shifts the population to
fewer locations, without lowering the distortion.

\smallskip

Next, we turn our attention to general metric spaces.
For arbitrary metric spaces, the distortion of voting can be larger.
In Section~\ref{sec:metric}, we analyze a simple example:
just under half the population is located at one point $i$, while the rest
of the population is spread out evenly over $\NUMCAND \gg 1$ locations
that are at distances just below 1 from each other and at distance 1
from $i$. 
As $\NUMCAND \to \infty$, we show that the expected distortion
converges to $\frac{3}{2}$.
The upper bound we establish in Section~\ref{sec:metric} does not
match this lower bound: we show that for every metric and every
\CPROB, the expected distortion is at most $2-\frac{1}{652}$.
We conjecture that the bound of $\frac{3}{2}$ is in fact tight ---
proving or disproving this conjecture is a natural direction for
future work, discussed in Section~\ref{sec:conclusion}.

The significance of our upper bounds on distortion (for the line and
for general metric spaces)
arises from the contrast to the corresponding bounds when
$\VPROB \neq \CPROB$.
In revisiting the improved distortion results we prove, 
we notice two potential driving factors: 
(1) The two candidates are independently and identically distributed. 
(2) The distributions of candidates and voters are the same.
One may wonder whether the innocuous-looking assumption of
i.i.d.~candidates alone could be responsible for the lower
distortion, without requiring that candidates be of the people.
In Section~\ref{sec:different}, we rule out this possibility by
establishing a (tight) bound of 2 on the distortion of voting when
candidates are drawn i.i.d.~from $\CPROB \neq \VPROB$, 
both in general metrics and on the line. 
The (small, but constant) gap between the distortions of 
$2-\frac{1}{652}$ and 2 in general metric spaces, 
and the significant gap between the distortions of $4-2\sqrt{2}
\approx 1.1716$ and $2$ on the line, 
show that government by the people is more efficient
when it is also \emph{of the people}.
The exact size of the gap between the two distortions in general
metric spaces is a natural open question.

\subsection*{Related Work}
There has been a lot of interest recently in circumventing
the impossibility results of voting and social choice by
approximation; see, e.g.,
\cite{procaccia:rosenschein:distortion,procaccia:approximation:gibbard,caragiannis:procaccia:voting}
and \cite{boutilier:rosenschein:incomplete} for a recent survey.
Of particular interest is the recent direction in which the voters'
objective functions are derived from proximity in a metric space
\cite{anshelevich:bhardwaj:postl,anshelevich:postl:randomized,anshelevich:sekar:blind,anshelevich:ordinal,goel:krishnaswamy:munagala,feldman:fiat:golomb}.
One of the important issues is providing incentives for truthful
revelation of preferences (e.g., \cite{feldman:fiat:golomb});
in this paper, we side-step this issue by considering only
elections between two candidates at a time.

Our work is most directly inspired by the recent work of 
Anshelevich et al.~\cite{anshelevich:bhardwaj:postl,anshelevich:postl:randomized},
which analyzes the distortion of ordinal voting rules when evaluated
for metric preferences.
Our work departs from \cite{anshelevich:bhardwaj:postl,anshelevich:postl:randomized}
in assuming that the candidates themselves are drawn i.i.d.~from underlying
distributions, and in particular in analyzing the case when the
distribution of the candidates is equal to that of the voters.

Anshelevich and Postl~\cite{anshelevich:postl:randomized} consider a
condition of instances that also aims to capture that candidates are
in some sense ``representative'' of the voting population.
Specifically, they define a notion of \emph{decisiveness} as follows:
Let $i$ be a voter, and $j_i, j'_i$ her two closest candidates, with
$\Dist{i}{j_i} \leq \Dist{i}{j'_i}$.
An instance is $\alpha$-decisive (for $\alpha \leq 1$)
if $\Dist{i}{j_i} \leq \alpha \Dist{i}{j'_i}$ for all $i$; 
in other words, when $\alpha \ll 1$, 
every voter has a strongly preferred candidate.
Naturally, the decisiveness condition is applicable only in elections
in which the number of candidates is large or the space of voters is
highly clustered.
In our work, by considering candidates drawn from the voter
distribution, we avoid such assumptions.


\section{Preliminaries}
\label{sec:preliminaries}

The candidates and voters are embedded in 
  a finite metric space $\DIST = (\Dist{i}{j})_{i,j}$
  with \emph{points (locations)} $i=1,\ldots,\NUMCAND$.
Depending on the context, we will refer to $i$ as a point, candidate, or voter. 
The probability for a candidate to be drawn from point $i$ is
\CProb{i}; we write $\CPROB = (\CProb{i})_i$.
The fraction of voters at $i$ is \VProb{i}, 
summarized as $\VPROB = (\VProb{i})_i$.
For a subset of points $A$, 
we write $\CProb{A} = \sum_{i\in A} \CProb{i}$
to denote the total probability mass in $A$, 
and similarly for \VProb{A}.
The \emph{social cost} of a candidate $i$ is his
average distance to all voters:
\begin{align}
\Cost{i} & = \sum_{j} \VProb{j} \cdot \Dist{i}{j}.
\label{eqn:cost-def}
\end{align}

When candidates $i$ and $i'$ are competing, 
each voter $j$ votes for the candidate that is closer\footnote{Recall
  the discussion of tie breaking in Footnote~\ref{ftn:tie-breaking}.} to her,
i.e., for $\argmin_{i,i'} (d(j,i), d(j,i'))$.
The \emph{winner} is the candidate who gets more votes:
$i$ wins iff $\sum_{j: \Dist{i}{j} \leq \Dist{i'}{j}} \VProb{j} \geq \half$.
For two candidates $i,i'$, 
let \Winner{i}{i'} denote the winner as just described, 
and let $\Opt{i}{i'} = \argmin_{j \in \SET{i,i'}} \Cost{j}$ be the
candidate of lower social cost.
The \emph{distortion} of an election between two candidates $(i, i')$
is defined as
\begin{align*}
\Ratio{i}{i'} = \frac{\Cost{\Winner{i}{i'}}}{\Cost{\Opt{i}{i'}}}.
\end{align*}

We are interested in the (expected) \emph{distortion} of the instance
$(\DIST, \CPROB, \VPROB)$, defined as
the expected distortion of an election between two candidates drawn
i.i.d.~from the candidate distribution \CPROB:
\begin{align}
\Social{\DIST}{\CPROB}{\VPROB}
& = \Expect[i,i' \sim \CPROB]{\Ratio{i}{i'}}
  = \Expect[i,i' \sim \CPROB]{\frac{\Cost{\Winner{i}{i'}}}{\Cost{\Opt{i}{i'}}}}
\; = \; 2 \sum_{i < i'} \CProb{i} \CProb{i'} \cdot \frac{\Cost{\Winner{i}{i'}}}{\Cost{\Opt{i}{i'}}}
       + \sum_{i} \CProb{i}^2 \cdot 1.
\label{eqn:social-cost-def}
\end{align}

In particular, our goal is to analyze the worst-case
distortion when the candidates are representative and when they are
not, that is, we want to find the gap between
\begin{equation*}
\max_{\DIST, \CPROB, \VPROB} \Social{\DIST}{\CPROB}{\VPROB} \quad \text{and} \quad
  \max_{\DIST, \CPROB} \Social{\DIST}{\CPROB}{\CPROB}.
\end{equation*}


\newcommand{\LINE}{\ensuremath{\mathcal{L}}\xspace}

\section{Identical Distributions on the Line}
\label{sec:line}

We begin with the simplest setting: the underlying metric space is the line,
and two candidates are drawn independently from the population of voters ($\CPROB = \VPROB$).
We first show a family of examples (a variant of Example~\ref{ex:line-three})
for which the expected distortion gets arbitrarily close to $4-2\sqrt{2} \approx 1.1716$.

\begin{example} \label{ex:line-iid}
The metric space is the line, denoted by \LINE. 
There are $\CProb{1}=\frac{1}{2}-\eps$ voters at location $x_1 = -1$, 
$\CProb{2}=1-\frac{1}{\sqrt{2}}$ voters at $x_2 = \eps$, and 
$\CProb{3}=\frac{1}{\sqrt{2}}-\frac{1}{2}+\eps$ voters at $x_3 = 1$.
This example is obtained  from Example \ref{ex:line-three}
by moving a suitable fraction of voters from location $x_2 = \eps$ to $x_3 = 1$,
carefully trading off between two factors:
(1) decreasing the pairwise distortion between the candidates at $-1$ and $1$,
but (2) increasing the chance of a such an election happening.
\begin{figure}[h]
\centering
\psset{unit=0.8cm,arrowsize=0.2 5}
\pspicture(-9,-2)(6,1)
\psline{|-|}(-5,0)(5,0)
\psline{|-|}(-5,0)(0,0)

\rput(-5,0.5){-1}
\rput(5,0.5){1}
\rput(0,0.5){0}

\psline{->}(-5,-1.2)(-5,-0.2)
\rput(-5,-1.5){49.99\%}
\psline{->}(0.2,-1.2)(0.2,-0.2)
\rput(0.2,-1.5){29.29\ldots\%}
\psline{->}(5,-1.2)(5,-0.2)
\rput(5,-1.5){20.71\ldots\%}

\rput(-8.5,-1.5){Voters/Candidates}
\endpspicture

\caption{The worst case instance on the line with $\Social{\LINE}{\CPROB}{\CPROB} = 4 - 2\sqrt{2}$.}
\label{fig:example3}
\end{figure}

Because the voters at $x_2 = \eps$ are slightly closer to 1 than to -1,
a candidate drawn from $x_3 = 1$ will win against a candidate
drawn from $x_1 = -1$.
The costs of the two candidates are
\begin{align*}
  \Cost{1} &= \CProb{2}\Dist{1}{2} + \CProb{3}\Dist{1}{3}
    = \CProb{2} + 2\CProb{3} + O(\eps) = \frac{1}{\sqrt{2}} + O(\eps), \\
  \Cost{3} &= \CProb{1}\Dist{1}{3} + \CProb{2}\Dist{2}{3}
    = 2\CProb{1} + \CProb{2} - O(\eps) = 2 - \frac{1}{\sqrt{2}} - O(\eps).
\end{align*}
Because the candidates are drawn independently from \CPROB,
the election between $x_1$ and $x_3$ happens with probability $2 \CProb{1}\CProb{3}$.
In all other cases (when a candidate from $x_2$ runs against one from
$x_1$ or $x_3$, or both candidates are from the same location), 
the voters elect the socially better candidate.
Therefore, the expected distortion is
\begin{align*}
\Social{\LINE}{\CPROB}{\CPROB} 
& = (1 - 2\CProb{1}\CProb{3}) \cdot 1 + (2\CProb{1}\CProb{3}) \cdot \frac{\Cost{3}}{\Cost{1}} 
\; = \; 4-2\sqrt{2} - O(\eps).
\end{align*}
\end{example}

Our first main result is that Example~\ref{ex:line-iid} gives the worst distortion on the line.

\begin{theorem} \label{thm:lineWorst}
For any distribution \CPROB,
we have $\Social{\LINE}{\CPROB}{\CPROB} \le 4 - 2\sqrt{2}$.
\end{theorem}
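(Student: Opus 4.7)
The plan is to prove the bound by reducing the worst case to distributions supported on three points, and then solving that three-point case by direct optimization. By continuity and compactness, it suffices to consider $\CPROB$ with finite support $\{x_1 < \cdots < x_n\}$ and positive masses $p_1, \ldots, p_n$. The distortion decomposes as
\[
\Social{\LINE}{\CPROB}{\CPROB} \;=\; 1 + 2\sum_{(i,i')\in\mathcal{B}} p_i p_{i'} \bigl(\Ratio{i}{i'} - 1\bigr),
\]
where $\mathcal{B}$ is the set of pairs with $\Winner{i}{i'} \neq \Opt{i}{i'}$. Each term $\Ratio{i}{i'}-1 = (\Cost{\Winner{i}{i'}}-\Cost{\Opt{i}{i'}})/\Cost{\Opt{i}{i'}}$ is a ratio of linear functionals of $p$, and $\mathcal{B}$ together with the winner/optimum identities stays constant inside cells of a finite hyperplane arrangement in $p$-space (cut out by the majority-switch equations $\sum_j p_j \mathbf{1}[\Dist{i}{j} < \Dist{i'}{j}] = 1/2$ and the cost-equality equations $\Cost{i} = \Cost{i'}$).

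The core step is a support-reduction lemma: if $n \ge 4$, one can perturb $\CPROB$ to a distribution with strictly smaller support and at least as large a distortion. I would fix the locations $x_i$ and move along a one-parameter family shifting mass between two adjacent support points. Within a single combinatorial cell, the distortion is a sum of rational functions (each with quadratic numerator and linear denominator in the shift parameter), and I would argue that at least one direction is monotone nondecreasing. Pushing along it either drives some $p_j$ to $0$ (reducing support) or crosses into a neighboring cell, at which point one restarts the perturbation inside the new cell. Because the arrangement has finitely many cells, the process must terminate at support size at most $3$.

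Having reduced to three support points, scale so that $x_1 = -1$, $x_3 = 1$, $x_2 = t \in (-1,1)$. A short calculation shows that the pairs $(x_1, x_2)$ and $(x_2, x_3)$ always elect their social optimum (the vote count and cost comparison flip together, both controlled by the sign of $p_i + p_j - p_k$ for the odd-one-out $k$), so only $(x_1, x_3)$ can contribute. By symmetry, assume $x_3$ wins and $x_1$ is optimal; this forces $t > 0$, $p_1 < 1/2$, and $p_1 - p_3 > p_2 t$. Direct computation then gives
\[
\Social{\LINE}{\CPROB}{\CPROB} \;=\; 1 + \frac{4 p_1 p_3 (p_1 - p_3 - p_2 t)}{p_2 (1+t) + 2 p_3}.
\]
Using $p_1 + p_2 + p_3 = 1$, differentiation shows this expression is nonincreasing in $t$, so its supremum is attained as $t \to 0^+$, which in turn forces $p_1 \to 1/2$. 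Setting $p_3 = s$ and $p_2 = 1/2 - s$ and optimizing in $s$, the critical point is $s = (\sqrt{2}-1)/2$, yielding the claimed bound $4 - 2\sqrt{2}$ and recovering the configuration of Example~\ref{ex:line-iid}.

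The main obstacle is the support-reduction lemma. The distortion is only piecewise rational, and the monotone perturbation direction may vary from cell to cell, so one must argue that the iteration neither stalls at a degenerate support-$4$ configuration where all local directions are flat, nor cycles among cells of equal distortion. A plausible resolution is to always shift mass so as to increase the contribution of the currently dominant bad pair, with a careful lexicographic tie-breaking rule; formalizing this choice and handling the boundary cases where the combinatorial structure degenerates is the genuine technical work of the proof.
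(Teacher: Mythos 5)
Your high-level plan — reduce to support size three, then optimize the three-point instance directly — is exactly the paper's strategy, and your three-point endgame is essentially correct: the formula
\[
\Social{\LINE}{\CPROB}{\CPROB} = 1 + \frac{4 p_1 p_3 (p_1 - p_3 - p_2 t)}{p_2(1+t) + 2p_3}
\]
checks out, and it is indeed decreasing in $t$. (One small logical slip: sending $t\to 0^+$ does not \emph{force} $p_1 \to 1/2$; that is a separate, independent optimization step — the expression at $t=0$ is increasing in $p_1$, so one then pushes $p_1 \to 1/2$. The conclusion and the critical value $p_3 = (\sqrt 2 - 1)/2$ are right.)

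The genuine gap is the support-reduction lemma, as you yourself flag. The sketch you give — shift mass between adjacent points, argue that within a combinatorial cell one direction is monotone nondecreasing, iterate — is not adequate. Within a fixed cell, each contributing term has the form $\frac{p_i p_j \, (\text{linear in } \CPROB)}{\text{linear in } \CPROB}$; after a one-dimensional mass shift this becomes $\frac{\text{quadratic}}{\text{linear}}$, which decomposes as (linear) $+$ (constant)$/$(linear) and is therefore convex or concave depending on signs that vary from term to term. A sum of such terms need not be monotone, convex, or concave in the shift parameter, so there is no a priori reason a useful perturbation direction exists, and there is no obvious potential function ruling out cycling among cells of equal value. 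The paper's proof avoids this trap: rather than shifting between arbitrary adjacent points, it (i) characterizes voting and cost on the line via the median (Lemmas~\ref{lem:voteOrder} and~\ref{lem:socialOrder}), so only cross-median pairs can have distortion exceeding one; (ii) rewrites $\Social{\LINE}{\CPROB}{\CPROB}$ in terms of the per-candidate average distortion $\Ratio{i}{}$ and observes it is linear in $\sum_{i \in R}\CProb{i}\Ratio{i}{}$ when $\CProb{L}, \CProb{m}, \CProb{R}$ are held fixed; and (iii) performs \emph{global} merges toward the worst candidate $y^* = \arg\max_{y\in R} \Ratio{y}{}$ (Lemma~\ref{lem:mergeY}), then collapses all of $R$ (Lemma~\ref{lem:reduceR}), then reduces $L$ one point at a time using a single-variable convexity argument in a location coordinate rather than a probability (Lemma~\ref{lem:reduceL}). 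Each merge is proved term-by-term not to decrease the relevant sum; the operations strictly shrink support, so termination is immediate. Filling the gap in your argument essentially requires rediscovering something like this structured merge sequence — it is the main technical content of the theorem, not a routine formalization.
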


We will prove Theorem~\ref{thm:lineWorst} in Section~\ref{sec:line-ub}.
In preparation, in Section~\ref{sec:line-order},
we first provide some structural characterization results about the
voting behavior and social cost on the line.
  
\subsection{Characterizing the Structure of Voting on the Line}
\label{sec:line-order}
Given a distribution on the line with support size $n$,
we label the support points as $1, \ldots, n$ from left to right.
Let $m$ be the index of the median\footnote{Recall that we assume the
  instance to be in general position, which implies uniqueness of the median.}, 
and let $L = \SET{1,\ldots,m-1}$ and $R = \SET{m+1,\ldots,n}$
denote the locations to the left and to the right of the median, respectively.
By the definition of the median,
$\CProb{L} < \half < \CProb{L} + \CProb{m}$ 
and $\CProb{R} < \half < \CProb{m} + \CProb{R}$.

\begin{lemma} \label{lem:voteOrder}
If two candidates $(x,y)$ are drawn, the one closer to $m$ wins the election.
\end{lemma}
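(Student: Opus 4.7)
The plan is to reduce the lemma to a one-line arithmetic identity relating the midpoint of the two candidates to the median location $m$. Without loss of generality, relabel so that the two candidates occupy positions $x < y$ on the line. A voter at location $z$ votes for whichever candidate lies on her side of the midpoint $M := (x+y)/2$: she votes for $x$ when $z < M$ and for $y$ when $z > M$. The winner is therefore determined by which side of $M$ carries strictly more than half of the voter mass.

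Next I would show that this majority side always contains $m$. Split on the sign of $M - m$ (the general-position assumption excludes $M = m$). If $M < m$, then every voter at a location $\geq m$ lies strictly to the right of $M$ and hence votes for $y$; by the median inequality $\CProb{m} + \CProb{R} > 1/2$ recalled in the preceding paragraph, it follows that $y$ wins. On the other hand, $M < m$ rearranges to $y - m < m - x$, which in particular forces $x < m$; a brief check on the sign of $y - m$ then yields $|y - m| < |x - m|$, so the winner $y$ is indeed closer to $m$. The case $M > m$ is completely symmetric and gives that $x$ wins and is closer to $m$.

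I do not expect any real obstacle. The only step meriting a moment of care is the conversion of $y - m < m - x$ into $|y - m| < |x - m|$ when one of the candidates lies on the ``wrong'' side of $m$, but once one notes that $M < m$ forces $x \leq m$ (and symmetrically for the other case), the check is immediate. The entire argument is thus a two-step observation: first that voters split by the midpoint $M$, and second that $M$ and $m$ lie on opposite sides of one another exactly when the corresponding candidates lie on opposite sides of each other with respect to distance to $m$.
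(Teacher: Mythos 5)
Your proof is correct and uses essentially the same ingredients as the paper's: voters split according to the midpoint of the two candidates, and the median inequalities $\CProb{L} + \CProb{m} > \tfrac12$ and $\CProb{m} + \CProb{R} > \tfrac12$ give the majority to whichever side contains $m$. The only organizational difference is that the paper fixes $x$ as the closer candidate up front and cases on whether $y$ lies in $L$ or $R$, while you case on the sign of $M - m$ and then verify after the fact that the winner is the closer one; these are equivalent framings of the same argument.
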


\begin{proof}
Without loss of generality, we assume that $d_{x,m} < d_{y,m}$ and $x \in L \cup \{m\}$;
that is, $x$ lies to the left of the median, or $x$ is the median.
There are two cases depending on whether $y$ is also to the left of $m$.
\begin{enumerate}
\item If $y \in L$,
  then all voters to the right of the median as well as the median are
  going to vote for $x$, so $x$ gets a $\CProb{m} + \CProb{R} > \half$
  fraction of the votes.
\item If $y \in R$,
  then all voters in $L$ as well as $m$ are going to vote for $x$,
  so $x$ gets a $\CProb{L} + \CProb{m} > \half$ fraction of the votes.
\end{enumerate}
In either case, $x$ gets more than half of the votes and wins the election.
\end{proof}

The next lemma characterizes the social cost ordering on the line.
\begin{lemma} \label{lem:socialOrder}
If $x,y$ are on the same side of the median $m$ (including one of them being the median),
the one closer to $m$ has smaller social cost.
\end{lemma}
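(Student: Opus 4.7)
The plan rests on the classical one-dimensional fact that $\Cost{\cdot}$, viewed as a function of position on the line, is convex and piecewise linear with its unique minimum at the weighted median $m$ (uniqueness follows from the general-position assumption). Hence $\Cost{\cdot}$ is strictly decreasing as one moves from left to right up to $m$, and strictly increasing to the right of $m$, which immediately implies the lemma. Rather than quote this as a black box, I would verify the required monotonicity directly using the collinearity of the support points.

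Concretely, I would fix $x, y$ both on the same side of $m$; by symmetry, assume $y < x \le m$ in the left-to-right labeling of the support (so $y \in L$, and $x \in L \cup \{m\}$). For each voter index $j$, the collinearity of the points on the line lets me evaluate $|\Dist{y}{j} - \Dist{x}{j}|$ exactly, partitioning the voters into three groups: those with $j \ge x$, who contribute exactly $+\Dist{y}{x}$ to $\Cost{y} - \Cost{x}$; those with $j \le y$, who contribute exactly $-\Dist{y}{x}$; and those strictly between $y$ and $x$, whose contributions lie in $[-\Dist{y}{x}, +\Dist{y}{x}]$. Bounding the middle group pessimistically by $-\Dist{y}{x}$ yields
\[
\Cost{y} - \Cost{x} \;\ge\; \Dist{y}{x} \cdot \Bigl( \VProb{\{j : j \ge x\}} - \VProb{\{j : j < x\}} \Bigr).
\]

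To close the argument I would use the median inequalities noted just before the lemma: since $x \in L \cup \{m\}$, the mass at or to the right of $x$ includes all of $\{m\} \cup R$, hence is at least $\CProb{m} + \CProb{R} > \tfrac{1}{2}$, while the mass strictly to the left of $x$ is at most $\CProb{L} < \tfrac{1}{2}$. Combined with $\Dist{y}{x} > 0$, this gives $\Cost{y} > \Cost{x}$, as required. There is no real obstacle: this is a routine calculation on the line, and the only mild care needed is to set up the three-group partition so that the median inequality $\CProb{L} < \tfrac{1}{2} < \CProb{m} + \CProb{R}$ applies cleanly in both cases $x \in L$ and $x = m$.
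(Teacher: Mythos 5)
Your proof is correct and uses the same underlying ingredients as the paper's: the exact behavior of $|\Dist{y}{j}-\Dist{x}{j}|$ on the line and the median inequalities $\CProb{L}<\tfrac12<\CProb{m}+\CProb{R}$. The only cosmetic difference is that you partition voters by position relative to both $y$ and $x$ (a three-way split with exact contributions $\pm\Dist{y}{x}$ and a pessimistic bound for the middle), whereas the paper splits into $L$ vs.\ $\{m\}\cup R$ and finishes with the triangle inequality; your version is the textbook ``weighted median minimizes $L^1$'' argument and is, if anything, a bit more transparent.
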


\begin{emptyproof}
Without loss of generality, assume that $x \in L \cup \SET{m}$, 
$y \in L$, and $d_{x,m} < d_{y,m}$.
Intuitively, $x$ has smaller social cost because more than half of the
population need to first get to $x$ before they can get to $y$.
Formally, we have
\begin{align*}
\Cost{x} & = \sum_{i \in L} \CProb{i} \Dist{i}{x} 
           + \sum_{i \in \{m\} \cup R} \CProb{i} \Dist{i}{x}
\; = \;  \sum_{i \in L} \CProb{i} \Dist{i}{x} 
       + \sum_{i \in \SET{m} \cup R} \CProb{i} \left(\Dist{i}{y} - \Dist{x}{y}\right) \\
& \stackrel{\CProb{L} \leq \CProb{m} + \CProb{R}}{\le}
  \sum_{i \in L} \CProb{i} \left(\Dist{i}{x} - \Dist{x}{y}\right) 
+ \sum_{i \in \SET{m} \cup R} \CProb{i} \Dist{i}{y} \\
& \stackrel{\bigtriangleup-\text{inequality}}{\leq}
  \sum_{i \in L} \CProb{i} \Dist{i}{y}
+ \sum_{i \in \SET{m} \cup R} \CProb{i} \Dist{i}{y}
\; = \; \Cost{y}. \QED
\end{align*}
\end{emptyproof}

As a simple corollary of Lemmas~\ref{lem:voteOrder} and \ref{lem:socialOrder},
  notice that if two candidates $(x,y)$ are drawn from the same side of the median
  (including when one of them is the median),
  majority voting always elects the socially better candidate.
This observation allows us to simplify the expression for \Social{\DIST}{\CPROB}{\CPROB} on the line,
\begin{align*}
\Social{\LINE}{\CPROB}{\CPROB} 
& = \sum_{i \in [n]} \CProb{i}^2 + \sum_{i,j\in [n]} 2\CProb{i}\CProb{j}\Ratio{i}{j}
  \; = \; 1 + \sum_{i\in L, j\in R} 2 \CProb{i}\CProb{j} (\Ratio{i}{j} - 1).
\end{align*}

\subsection{Proof of the Upper Bound of $4-2\sqrt{2}$}
\label{sec:line-ub}

In this section, we prove Theorem~\ref{thm:lineWorst}, 
showing that the worst-case distortion on the line is $4-2\sqrt{2}$.
The high-level idea is that, given any instance $(\LINE, \CPROB)$ with
support size larger than 3,
we can iteratively reduce its support size to 3 using a series of operations
(Lemmas~\ref{lem:mergeY}, \ref{lem:reduceR} and \ref{lem:reduceL}),
while preserving (or increasing) $\Social{\LINE}{\CPROB}{\CPROB}$.
Once the instance has support size 3,
we can optimize the locations and probabilities of these 3 points.

As before, let $m$ be the index of the median, 
and let $L = \SET{1,\ldots,m-1}$ and $R = \SET{m+1,\ldots,n}$
denote the points to the left and to the right of the median, respectively.
We can assume that both $L$ and $R$ are non-empty;
otherwise, the median is the leftmost or rightmost point,
and we always elect the socially better candidate.

The proof proceeds by moving probability mass within $L$ or within
$R$ to merge points until $|L| = |R| = 1$. 
None of the operations in this section will change the median $m$,
so the election results are still decided by the candidates' distance to $m$.

\begin{figure}[h]
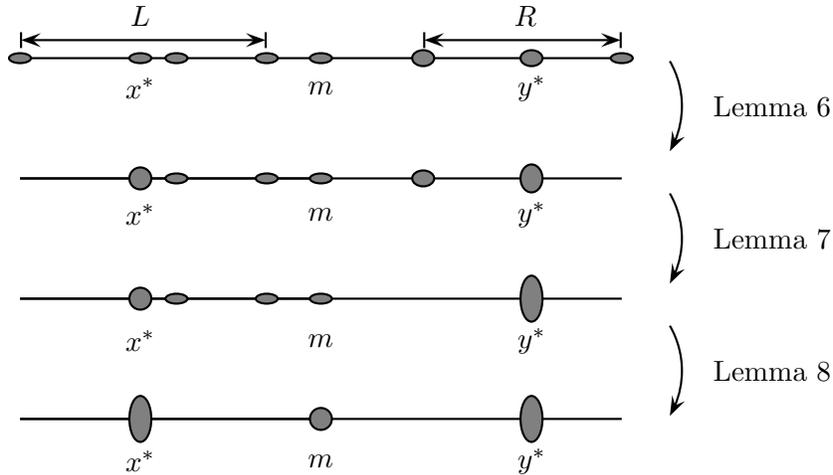

\centering
\psset{unit=0.8cm,arrowsize=0.2 1,fillstyle=solid}
\pspicture(-6,4)(9,-4)

\psline{-}(-5,3)(0,3)
\psline{-}(0,3)(5,3)
\psellipse[fillcolor=gray](-5,3)(0.2,0.1)
\psellipse[fillcolor=gray](-3,3)(0.2,0.1)
\psellipse[fillcolor=gray](-2.4,3)(0.2,0.1)
\psellipse[fillcolor=gray](-0.9,3)(0.2,0.1)
\psellipse[fillcolor=gray](0,3)(0.2,0.1)
\psellipse[fillcolor=gray](1.7,3)(0.2,0.15)
\psellipse[fillcolor=gray](3.5,3)(0.2,0.15)
\psellipse[fillcolor=gray](5,3)(0.2,0.1)

\psline{|<->|}(-5,3.3)(-0.9,3.3)
\psline{|<->|}(1.7,3.3)(5,3.3)
\rput(-3,3.7){$L$}
\rput(3.4,3.7){$R$}
\rput(0,2.5){$m$}
\rput(-3,2.5){$x^*$}
\rput(3.5,2.5){$y^*$}

\psline{-}(-5,1)(5,1)
\psline{-}(-5,1)(0,1)
\psellipse[fillcolor=gray](-3,1)(0.2,0.2)
\psellipse[fillcolor=gray](-2.4,1)(0.2,0.1)
\psellipse[fillcolor=gray](-0.9,1)(0.2,0.1)
\psellipse[fillcolor=gray](0,1)(0.2,0.1)
\psellipse[fillcolor=gray](1.7,1)(0.2,0.15)
\psellipse[fillcolor=gray](3.5,1)(0.2,0.25)
\rput(0,0.4){$m$}
\rput(-3,0.4){$x^*$}
\rput(3.5,0.4){$y^*$}

\psline{-}(-5,-1)(5,-1)
\psline{-}(-5,-1)(0,-1)
\psellipse[fillcolor=gray](-3,-1)(0.2,0.2)
\psellipse[fillcolor=gray](-2.4,-1)(0.2,0.1)
\psellipse[fillcolor=gray](-0.9,-1)(0.2,0.1)
\psellipse[fillcolor=gray](0,-1)(0.2,0.1)
\psellipse[fillcolor=gray](3.5,-1)(0.2,0.4)
\rput(0,-1.7){$m$}
\rput(-3,-1.7){$x^*$}
\rput(3.5,-1.7){$y^*$}

\psline{-}(-5,-3)(5,-3)
\psline{-}(-5,-3)(0,-3)
\psellipse[fillcolor=gray](-3,-3)(0.2,0.4)
\psellipse[fillcolor=gray](0,-3)(0.2,0.2)
\psellipse[fillcolor=gray](3.5,-3)(0.2,0.4)
\rput(0,-3.7){$m$}
\rput(-3,-3.7){$x^*$}
\rput(3.5,-3.7){$y^*$}

\psarc{<-}(4.5,2.2){1.5}{-30}{30}
\psarc{<-}(4.5,0){1.5}{-30}{30}
\psarc{<-}(4.5,-2.2){1.5}{-30}{30}
\rput(7.5,2.2){Lemma~\ref{lem:mergeY}}
\rput(7.5,0){Lemma~\ref{lem:reduceR}}
\rput(7.5,-2.2){Lemma~\ref{lem:reduceL}}
\endpspicture

\caption{An example of the series of operations
  (Lemmas~\ref{lem:mergeY}, \ref{lem:reduceR} and \ref{lem:reduceL})
  used to reduce the support size to 3 on the line,
  while preserving or increasing
  $\Social{\LINE}{\CPROB}{\CPROB}$.
Probability mass is roughly represented by sizes of ellipses.}
\label{fig:merge_line}
\end{figure}

When shifting the probability mass, we will not be able to guarantee
that no pairwise election sees a decrease in distortion.
Instead, we use a more global argument to show that the operation
increases the distortion \emph{on average}.
We define $\Ratio{i}{}$ to be the expected distortion conditioned on
one of the candidates being $i$,
and the other candidate being drawn according to \CPROB, that is,
\begin{equation*}
  \Ratio{i}{} = \sum_j \CProb{j} \Ratio{i}{j}.
\end{equation*}
We will show that so long as \CProb{L}, \CProb{m}, and \CProb{R} remain the same,
\Social{\LINE}{\CPROB}{\CPROB} is a linear function of the average
distortion on one side of the median.
By Lemmas~\ref{lem:voteOrder} and \ref{lem:socialOrder}, 
the pairwise distortion can be larger than 1 only if two candidates
are on different sides of $m$; therefore, 
\begin{align*}
\Social{\LINE}{\CPROB}{\CPROB}
  & = 1 + 2 \sum_{i\in L, j\in R} \CProb{i}\CProb{j} (\Ratio{i}{j} - 1)
  \; = \; 1 + 2 \sum_{i\in L, j\in [n]} \CProb{i}\CProb{j} (\Ratio{i}{j} - 1)
  \; = \; 1 - 2 \CProb{L} + 2 \sum_{i\in L} \CProb{i} \Ratio{i}{},\\
\Social{\LINE}{\CPROB}{\CPROB}
  & = 1 - 2 \CProb{R} + 2 \sum_{i\in R} \CProb{i} \Ratio{i}{}.
\end{align*}

The two preceding equations formalize that whenever \CProb{L} and
\CProb{R} stay constant and $\sum_{i \in R} \CProb{i}\Ratio{i}{}$ 
(or $\sum_{i \in R} \CProb{i}\Ratio{i}{}$) does not decrease,
\Social{\LINE}{\CPROB}{\CPROB} also does not decrease. 
This fact is exploited repeatedly in the proofs of the following
lemmas.

\begin{lemma} \label{lem:mergeY}
Let $y^* = \argmax_{y \in R} \Ratio{y}{}$ be the ``worst'' candidate in $R$.
Then, moving all probability mass from indices $y > y^*$ to $y^*$
does not decrease \Social{\LINE}{\CPROB}{\CPROB}.
(A symmetric claim holds for the worst candidate 
$x^* = \argmax_{x \in L} \Ratio{x}{}$.)
\end{lemma}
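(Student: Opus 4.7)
The plan is to work with the identity $\Social{\LINE}{\CPROB}{\CPROB} = 1 - 2\CProb{R} + 2\sum_{i\in R}\CProb{i}\Ratio{i}{}$ established just before the lemma. Since the operation moves mass only within $R$, the quantity $\CProb{R}$ is preserved, so it suffices to show $\sum_{i\in R}\CProbP{i}\RatioP{i}{} \ge \sum_{i\in R}\CProb{i}\Ratio{i}{}$, where primed symbols refer to the instance after the operation. Writing $T = \{y \in R : y > y^*\}$, the old sum decomposes as $\sum_{i \in R, i \leq y^*}\CProb{i}\Ratio{i}{} + \sum_{y \in T}\CProb{y}\Ratio{y}{}$, and the defining maximality of $y^*$ bounds the second piece by $\CProb{T}\Ratio{y^*}{}$. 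So the problem reduces to a pointwise bound: $\RatioP{i}{} \ge \Ratio{i}{}$ for each $i \in R$ with $i \le y^*$.

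The pointwise bound rests on a feature specific to the line. Set $\Delta = \sum_{y\in T}\CProb{y}\Dist{y^*}{y}$. For every index $i$ with $x_i \le x_{y^*}$ and every $y \in T$, one-dimensional geometry gives $\Dist{i}{y^*} - \Dist{i}{y} = (x_{y^*} - x_i) - (x_y - x_i) = -\Dist{y^*}{y}$. Summing against $\CProb{y}$ yields a uniform shift $\CostP{i} = \Cost{i} - \Delta$ that simultaneously applies to every $j \in L$, to $j = m$, and to every $i \in R$ with $i \le y^*$.

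Next I would invoke the elementary fact that $(a-c)/(b-c) \ge a/b$ whenever $a \ge b > c > 0$. Fix $i \in R$ with $i \le y^*$ and $j \in L$. By Lemma~\ref{lem:voteOrder}, the winner of the $(i,j)$ election is determined by proximity to $m$ and is therefore unaffected by the operation. The value $\Ratio{i}{j}$ is $1$ if this winner has the smaller social cost, and equals $\max(\Cost{i},\Cost{j})/\min(\Cost{i},\Cost{j})$ otherwise. Because $\Cost{i}$ and $\Cost{j}$ both decrease by exactly $\Delta$, the relative order of the two costs is preserved and the ratio max/min can only grow, giving $\RatioP{i}{j} \ge \Ratio{i}{j}$. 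For $j \in R$ or $j = m$ we have $\Ratio{i}{j} = \RatioP{i}{j} = 1$ by the corollary of Lemmas~\ref{lem:voteOrder} and~\ref{lem:socialOrder}; since the total mass on $R \cup \{m\}$ is preserved, weighting by $\CProbP{j}$ yields $\RatioP{i}{} \ge \Ratio{i}{}$.

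Chaining these gives
\begin{align*}
\sum_{i\in R}\CProbP{i}\RatioP{i}{}
&= \sum_{i\in R, i \le y^*}\CProbP{i}\RatioP{i}{} \ge \sum_{i\in R, i \le y^*}\CProbP{i}\Ratio{i}{} \\
&= \sum_{i\in R, i \le y^*}\CProb{i}\Ratio{i}{} + \CProb{T}\Ratio{y^*}{} \ge \sum_{i\in R}\CProb{i}\Ratio{i}{},
\end{align*}
where the final inequality uses $\CProb{T}\Ratio{y^*}{} \ge \sum_{y\in T}\CProb{y}\Ratio{y}{}$ from the maximality of $y^*$. The main obstacle is spotting the line-specific uniform shift: once $\Cost{i}$ and $\Cost{j}$ drop by the same $\Delta$, pointwise monotonicity of $\Ratio{i}{j}$ is immediate, whereas in a general metric the two costs could change asymmetrically and the pointwise step would fail.
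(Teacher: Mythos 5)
Your proposal is correct and follows essentially the same route as the paper's proof: reduce to the quantity $\sum_{y\in R}\CProb{y}\Ratio{y}{}$, observe that the shift subtracts the same $\Delta = \sum_{y>y^*}\CProb{y}\Dist{y}{y^*}$ from the cost of every point at or left of $y^*$, apply the elementary inequality $(a-c)/(b-c)\ge a/b$ to get $\RatioP{i}{j}\ge\Ratio{i}{j}$ for pairs with $i,j\le y^*$, and finish by the maximality of $y^*$. The only differences are presentational: you state the fraction inequality explicitly and note that the uniform shift preserves the cost ordering (so \Opt{i}{j} is unchanged), a point the paper leaves implicit.
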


\begin{emptyproof}
Since the operation does not change \CProb{L} or \CProb{R},
it is sufficient to show that $\sum_{y \in R} p_y r_y$ does not decrease.
By Lemma~\ref{lem:voteOrder}, all election results 
between pairs $i, j \leq y^*$ are preserved.
Let $\CProbP{i}, \CostP{i},$ and $\RatioP{i}{}$ denote the
corresponding values of $\CProb{i}, \Cost{i}$ and $\Ratio{i}{}$ after
the operation. Then, for all $1 \le i,j \le y^*$,
\[
\RatioP{i}{j} 
\; = \; \frac{\CostP{\Winner{i}{j}}}{\CostP{\Opt{i}{j}}}
\; = \; \frac{\Cost{\Winner{i}{j}} - \sum_{y>y^*}\CProb{y}\Dist{y}{y^*}}{\Cost{\Opt{i}{j}} - \sum_{y>y^*}\CProb{y}\Dist{y}{y^*}}
\; \ge \; \frac{\Cost{\Winner{i}{j}}}{\Cost{\Opt{i}{j}}} 
\; = \; \Ratio{i}{j}.
\]

After the shift of probability mass, $y^*$ is the largest index.
Consider $m < y \leq y^*$.
Using that elections between two candidates on the same side of the
median always result in the socially better candidate winning, we bound
\[
\RatioP{y}{} 
\; = \; \sum_{1 \le i \le y^*} \CProbP{i} \RatioP{i}{y}
\; = \; \sum_{i \in L} \CProb{i} \RatioP{i}{y} + (1 - \CProb{L}) \cdot 1
\; \ge \; \sum_{i \in L} \CProb{i} \Ratio{i}{y} + (1 - \CProb{L}) 
\; = \; \Ratio{y}{}.
\]

Any candidates that used to be at $y > y^*$ are now at $y^*$, 
and $y^*$ used to be the worst candidate in $R$.
Hence, for all of the probability mass from locations $y > y^*$,
the expected distortion also weakly increases.
Combining these two cases, we get
\[ \sum_{m < y \le y^*} \CProbP{y} \RatioP{y}{}
\; = \; \sum_{m < y < y^*} \CProb{y} \RatioP{y}{} + \sum_{y^* \le y \le n} \CProb{y} \RatioP{y^*}{}
\; \ge \; \sum_{m < y \le y^*} \CProb{y} \Ratio{y}{} + \sum_{y^* \le y \le n} \CProb{y} \Ratio{y^*}{}
\; \ge \; \sum_{y \in R} \CProb{y} \Ratio{y}{}. \QED
\]
\end{emptyproof}

Lemma~\ref{lem:mergeY} can be applied repeatedly unless
the two worst candidates $x^*$ and $y^*$ are the leftmost and rightmost points.
We next show that in that case, 
either all the probability mass of $L$ or all the probability mass of
$R$ can be moved to $x^*$ or $y^*$, respectively.

\begin{lemma} \label{lem:reduceR}
Let $x^*$ and $y^*$ be the worst candidates in $L$ and $R$, respectively.
Assume w.l.o.g.~that $\Dist{m}{x^*} < \Dist{m}{y^*}$.
If $x^* = 1$ and $y^* = n$, 
then moving all probability mass from $R$ to $y^*$ 
does not decrease $\Social{\LINE}{\CPROB}{\CPROB}$.
\end{lemma}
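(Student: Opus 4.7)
\emph{The plan.} By the identity at the end of Section~\ref{sec:line-order}, $\Social{\LINE}{\CPROB}{\CPROB} = 1 + 2\sum_{i \in L, j \in R}\CProb{i}\CProb{j}(\Ratio{i}{j} - 1)$, and since the operation preserves the sets $L$, $\{m\}$, $R$ and their total probability masses, the lemma reduces---writing primes for post-operation quantities and using that only $y^*$ carries mass in $R$ afterwards---to showing
\[
\CProb{R}\sum_{i \in L}\CProb{i}\,\RatioP{i}{y^*} \;\geq\; \sum_{i \in L,\,j \in R}\CProb{i}\CProb{j}\,\Ratio{i}{j}.
\]
I would decompose this into (i) $\sum_{i \in L, j \in R}\CProb{i}\CProb{j}\Ratio{i}{j} \leq \CProb{R}\sum_{i \in L}\CProb{i}\Ratio{i}{y^*}$, leveraging the ``worst candidate'' hypothesis $y^* = \argmax_{y \in R}\Ratio{y}{}$, and (ii) the pointwise inequality $\RatioP{i}{y^*} \geq \Ratio{i}{y^*}$ for every $i \in L$, obtained via explicit cost-shift identities. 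Chaining the two yields the lemma.

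For (i), every same-side election (within $L \cup \{m\}$ or within $R \cup \{m\}$) has distortion $1$, so for every $j \in R$ one has $\Ratio{j}{} = \sum_{i \in L}\CProb{i}\Ratio{i}{j} + \CProb{m} + \CProb{R}$, and the analogous identity for $y^*$. Subtracting and using $\Ratio{y^*}{} \geq \Ratio{j}{}$ gives $\sum_{i \in L}\CProb{i}\Ratio{i}{j} \leq \sum_{i \in L}\CProb{i}\Ratio{i}{y^*}$; weighting by $\CProb{j}$ and summing over $j \in R$ delivers (i).

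For (ii), the hypothesis $x^* = 1$ implies $\Dist{m}{i} \leq \Dist{m}{x^*} < \Dist{m}{y^*}$ for every $i \in L$, so by Lemma~\ref{lem:voteOrder} candidate $i$ wins against $y^*$ both before and after the operation; thus $\Ratio{i}{y^*} = \max(1,\Cost{i}/\Cost{y^*})$, with the analogous formula post-operation. Because every $i \in L \cup \{m\}$ lies weakly to the left of every $y \in R$ on the line, $\Dist{i}{y^*} - \Dist{i}{y} = \Dist{y}{y^*}$, and a short calculation yields the clean shifts $\CostP{i} = \Cost{i} + \Delta$ for all $i \in L \cup \{m\}$ and $\CostP{y^*} = \Cost{y^*} - \Delta$, where $\Delta = \sum_{y \in R \setminus \{y^*\}}\CProb{y}\Dist{y}{y^*} \geq 0$. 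Since $\Cost{y^*} - \Delta = \sum_{k \in L \cup \{m\}}\CProb{k}\Dist{k}{y^*} > 0$, the elementary inequality $(a+\Delta)/(b-\Delta) \geq a/b$ (for positive $a,b$ and $0 \leq \Delta < b$) gives $\CostP{i}/\CostP{y^*} \geq \Cost{i}/\Cost{y^*}$, and taking maxima with $1$ preserves this to $\RatioP{i}{y^*} \geq \Ratio{i}{y^*}$.

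The main obstacle is recognizing that the ``worst candidate'' hypothesis $y^* = n$ must be invoked \emph{globally}---to bound the average over $j \in R$ by the single quantity $\sum_{i \in L}\CProb{i}\Ratio{i}{y^*}$---rather than attempting a term-by-term analysis of how each $\Ratio{i}{j}$ changes under mass redistribution, which can move in either direction. Once this reduction is in place, the remaining step is essentially forced by the one-dimensional geometry, which supplies the uniform cost shifts of $\pm\Delta$.
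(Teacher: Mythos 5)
Your proof is correct and follows essentially the same route as the paper's: you isolate the cross terms $\sum_{i\in L, j\in R}\CProb{i}\CProb{j}\Ratio{i}{j}$, invoke the worst-candidate hypothesis globally to replace each $j\in R$ by $y^*$, and then show each $\Ratio{i}{y^*}$ only increases because the operation raises $\Cost{i}$ (for $i\in L\cup\{m\}$) while lowering $\Cost{y^*}$. The only cosmetic difference is that you compute the cost changes as exact $\pm\Delta$ shifts whereas the paper merely records the two inequalities $\CostP{i}\ge\Cost{i}$ and $\CostP{y^*}\le\Cost{y^*}$, which is all that is needed.
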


\begin{proof}
As for the previous lemma, because we are only shifting probability
mass within $R$, it is sufficient to show that 
$\sum_{y \in R} \CProb{y} \Ratio{y}{}$ does not decrease.
Because more probability mass moved closer to $y^*$,
we have that $\CostP{y^*} \leq \Cost{y^*}$,
and because probability mass moved away from $L$ (to the right),
we get that $\CostP{i} \geq \Cost{i}$ for all $i \in L \cup \SET{m}$.

%
By Lemma~\ref{lem:voteOrder}, $y^*$ loses all of his elections both
before and after the move. 
Moreover, by Lemma~\ref{lem:socialOrder}, we get  
$\Ratio{y^*}{} = (1- \CProb{L}) + \sum_{i \in L} \CProb{i} \Ratio{i}{y^*}$
before the move, and  
$\RatioP{y^*}{} = (1- \CProbP{L}) + \sum_{i \in L} \CProbP{i} \RatioP{i}{y^*} 
= (1- \CProb{L}) + \sum_{i \in L} \CProb{i} \RatioP{i}{y^*}$ after the move.  
Since $\RatioP{i}{y^*} = \frac{\CostP{i}}{\CostP{y^*}} \geq \frac{\Cost{i}}{\Cost{y^*}} 
= \Ratio{i}{y^*}$ for all $i \in L$, 
we get that $\RatioP{y^*}{} \geq \Ratio{y^*}{}$.
Finally, because $y^*$ used to be the worst candidate in $R$,
and after the move of probability mass is the only candidate in $R$,
we bound 
\[
\CProbP{y^*} \RatioP{y^*}{} 
\; = \; \sum_{y\in R} \CProb{y} \RatioP{y^*}{} 
\; \ge \; \sum_{y\in R} \CProb{y} \Ratio{y^*}{} 
\; \ge \; \sum_{y\in R} \CProb{y} \Ratio{y}{},
\]
which concludes the proof.
\end{proof}

Once neither Lemma~\ref{lem:mergeY} nor Lemma~\ref{lem:reduceR}
can be applied, we can apply Lemma~\ref{lem:reduceL}.

\begin{lemma} \label{lem:reduceL}
Let $x^* = 1$, $y^* = n$ be the worst candidates in $L$ and $R$, respectively.
If $|L| > 1$, $|R| = 1$ and $\Dist{m}{y^*} > \Dist{m}{x^*}$,
then the size of $L$ can be reduced by 1 
without decreasing \Social{\LINE}{\CPROB}{\CPROB}.
\end{lemma}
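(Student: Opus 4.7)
My plan is to eliminate a non-extreme point of $L$ by redistributing its probability mass between $x^*$ and the median $m$, using the unique convex combination that writes the removed point as a weighted mean of the two.  Concretely, pick any point $i \in L$ with $i \neq x^*$ (which exists since $|L| > 1$); since $i$ lies strictly between $x^*$ and $m$ on the line, there is a unique $\beta = \Dist{i}{m}/\Dist{x^*}{m} \in (0,1)$ with $i = \beta \cdot x^* + (1-\beta) \cdot m$ as positions.  Define $\CPROBP$ by $\CProbP{x^*} = \CProb{x^*} + \beta \CProb{i}$, $\CProbP{m} = \CProb{m} + (1-\beta)\CProb{i}$, $\CProbP{i} = 0$, and $\CProbP{j} = \CProb{j}$ otherwise.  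This reduces $|L|$ by one, and since $\CProbP{L} + \CProbP{m} = \CProb{L} + \CProb{m}$ and $\CProbP{R} = \CProb{R}$, $m$ remains the median.  By convexity of $\Dist{z}{\cdot}$ on the line applied to the above identity, for any position $z$,
\[
\CostP{z} - \Cost{z} = \CProb{i}\bigl[\beta\Dist{z}{x^*} + (1-\beta)\Dist{z}{m} - \Dist{z}{i}\bigr] \geq 0,
\]
with equality whenever $z \leq x^*$ or $z \geq m$; hence $\Cost{x^*}$, $\Cost{m}$, and $\Cost{y^*}$ are preserved exactly, while $\Cost{j}$ weakly increases for every other $j \in L$.

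Next, I would use the hypothesis $\Dist{m}{y^*} > \Dist{m}{x^*}$ together with Lemma~\ref{lem:voteOrder} to note that every $j \in L \cup \{m\}$ defeats $y^*$ head-to-head, and Lemma~\ref{lem:socialOrder} to see that $\Cost{m} < \Cost{y^*}$; so $\Ratio{m}{y^*} = 1$ and $\Ratio{j}{y^*} = \max(1, \Cost{j}/\Cost{y^*})$ for every $j \in L$.  Combined with the cost analysis above, this yields three facts about the operation: (i) $\Ratio{x^*}{y^*}$ and $\Ratio{m}{y^*}$ are unchanged; (ii) $\Ratio{j}{y^*}$ does not decrease for any $j \in L \setminus \{x^*, i\}$; and (iii) by convexity of the function $z \mapsto \max(1,\Cost{z}/\Cost{y^*})$ on the line, applied to $i = \beta x^* + (1-\beta) m$,
\[
\Ratio{i}{y^*} \;\leq\; \beta\Ratio{x^*}{y^*} + (1-\beta)\Ratio{m}{y^*} \;=\; \beta\Ratio{x^*}{y^*} + (1-\beta).
\]

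Finally, using the identity $\Social{\LINE}{\CPROB}{\CPROB} = 1 - 2\CProb{R} + 2\CProb{R}\Ratio{y^*}{}$ derived in Section~\ref{sec:line-order} and expanding $\Ratio{y^*}{} = \sum_{j}\CProb{j}\Ratio{y^*}{j}$, the change in $\Ratio{y^*}{}$ under the operation decomposes into a sum of nonnegative contributions: one per $j \in L \setminus \{x^*, i\}$ from (ii), plus the single term $\CProb{i}\bigl[\beta\Ratio{x^*}{y^*} + (1-\beta) - \Ratio{i}{y^*}\bigr]$ from (i) and (iii).  Hence $\Social{\LINE}{\CPROBP}{\CPROBP} \geq \Social{\LINE}{\CPROB}{\CPROB}$.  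The main obstacle will be step (iii): in the generic subcase $\Cost{x^*} > \Cost{y^*}$ the inequality reduces to affine convexity of $z \mapsto \Cost{z}/\Cost{y^*}$, while in the degenerate subcase $\Cost{x^*} \leq \Cost{y^*}$ Lemma~\ref{lem:socialOrder} forces $\Cost{i} < \Cost{x^*} \leq \Cost{y^*}$, so all three ratios in (iii) equal $1$ and the inequality holds trivially.
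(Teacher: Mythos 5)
Your proof is correct, and it takes a genuinely different route from the paper's. The paper splits into two cases: if some $i \in L$ has $\Cost{i} \leq \Cost{n}$, it merges the probability mass at the rightmost point $n-2$ of $L$ onto the median $m$ (all pairwise ratios then weakly increase); otherwise, it writes $\Social{\LINE}{\CPROB}{\CPROB}$ as a linear-fractional function $\frac{B + \beta x_2}{A - x_2}$ of the \emph{position} $x_2$ of the second-leftmost point (leaving all masses fixed) and pushes $x_2$ to whichever endpoint increases the expression. You instead perform a single unified operation that works in all cases: a barycentric split of an interior point's mass between $x^*$ and $m$ with weights chosen so that $x_i = \beta x_{x^*} + (1-\beta)x_m$, then invoke convexity of $z \mapsto \Cost{z}$ as a function of position. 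This is arguably cleaner --- it avoids the case dichotomy on whether $\Cost{i} \lessgtr \Cost{n}$ --- and the key observation that the split preserves $\Cost{x^*}, \Cost{m}, \Cost{y^*}$ exactly (while weakly increasing every other cost in $L$) makes the bookkeeping via the identity $\Social{\LINE}{\CPROB}{\CPROB} = 1 - 2\CProb{R} + 2\CProb{R}\Ratio{y^*}{}$ quite transparent. One small nit: in your final paragraph, ``affine convexity'' should just read ``convexity'' of $z \mapsto \Cost{z}$ (which gives $\Cost{i} \leq \beta\Cost{x^*} + (1-\beta)\Cost{m}$, after which $\Cost{m} \leq \Cost{y^*}$ from Lemma~\ref{lem:socialOrder} finishes the estimate); the argument itself is sound. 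You should also spell out why all election outcomes are preserved: the split leaves all positions and the median unchanged, so Lemma~\ref{lem:voteOrder} applies verbatim to the new instance --- you assert this implicitly, but it is worth a sentence.
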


\begin{emptyproof}
Notice that $m = n-1$ and $L = \SET{1, \ldots, n-2}$.
Recall that the only elections in which the winner could be socially
inferior are those involving $n$ and a candidate $x \in L$.
Also, because $\Dist{m}{n} > \Dist{m}{x^*} \geq \Dist{m}{i}$ for all
$i$, we obtain that $n$ loses all elections.
We split the proof into two cases.

\begin{enumerate}
\item If there exists an $i \in L$ with $\Cost{i} \leq \Cost{n}$, then
  in particular, $\Cost{n-2} \leq \Cost{n}$.
  Thus, candidate $n-2$ wins all elections against $i \leq n-2$ (as he
  should) and against $n$ (as he should), while losing to $m$ (as he
  should). This implies that $\Ratio{n-2}{} = 1$.

  Consider the effect of moving all probability mass from $n-2$ to the
  median $m=n-1$.
  First, all election results remain the same.
  The contribution of the probability mass that used to be at $n-2$ 
  to the distortion does not change. (It was 1 before and is still 1.)
  Furthermore, \Cost{n} decreases while \Cost{i} increases for all
  $i < n-2$. 
  Because $n$ loses all pairwise elections, the overall distortion
  can only increase.

\item If $\Cost{i} > \Cost{n}$ for all $i \in L$,
  the expected distortion is exactly
\begin{align*}
\Social{\LINE}{\CPROB}{\CPROB} 
& = 1 - 2 \CProb{n} \CProb{L} + 2 \CProb{n} \sum_{i\in L} \CProb{i} \frac{\Cost{i}}{\Cost{n}}.
\end{align*}
Let $x_i$ denote the position of point $i$ on the line.
Since $|L| > 1$, we have a point at position $x_2$ in $L$ with $x_1 < x_2 < x_3$.
Writing $Y := \sum_{j \neq 2} \CProb{j} |x_n - x_j|$ and
$X_i := \sum_{j \neq 2} \CProb{j} |x_i - x_j|$, 
we get that $\Cost{i} = X_i + \CProb{2} |x_i - x_2|$,
and $\Cost{n} = Y + \CProb{2} (x_n - x_2)$.
Hence, we can rewrite 
\begin{align*}
\Social{\LINE}{\CPROB}{\CPROB} 
& = 1 - 2 \CProb{n} \CProb{L} + 2 \CProb{n} \sum_{i\in L} \CProb{i}
  \frac{X_i + \CProb{2} |x_i - x_2|}{Y + \CProb{2} (x_n - x_2)}\\
& = 1 - 2 \CProb{n} \CProb{L}
  + \frac{2 \CProb{n}}{Y + \CProb{2} x_n - \CProb{2} x_2}
  \cdot \left(\sum_{i\in L} \CProb{i} X_i
  - \CProb{2} (\CProb{1} x_1 - \sum_{i=3}^{n-2} \CProb{i} x_i)
  + \CProb{2} x_2 (\CProb{1} - \sum_{i=3}^{n-2} \CProb{i})\right)\\
& = 1 - 2 \CProb{n} \CProb{L}
  + \frac{2 \CProb{n}}{(Y/\CProb{2} + x_n) - x_2}
  \cdot \left(\sum_{i\in L} \CProb{i} X_i/\CProb{2}
  - (\CProb{1} x_1 - \sum_{i=3}^{n-2} \CProb{i} x_i)
  + x_2 (\CProb{1} - \sum_{i=3}^{n-2} \CProb{i})\right).
\end{align*}
Treating everything except $x_2$ as constant,
  this expression is of the form $\frac{B + \beta x_2}{A-x_2}$ for all $x_2 \in [x_1, x_3]$,
  where $A$, $B$, and $\beta$ are constants independent of $x_2$.
The derivative of this expression with respect to $x_2$ is
$\frac{\beta A + B}{(A-x_2)^2}$;
its sign is always the sign of $\beta A + B$.
If $\beta A + B > 0$, then increasing $x_2$ to $x_3$ strictly increases the expected distortion;
  otherwise, $x_2$ can be decreased to $x_1$ without decreasing the expected distortion.
In either case, we reduce the size of $L$ by 1. \QED
\end{enumerate}
\end{emptyproof}

We are now ready to prove Theorem~\ref{thm:lineWorst}.

\begin{extraproof}{Theorem~\ref{thm:lineWorst}}
By Lemmas~\ref{lem:mergeY}, \ref{lem:reduceR} and \ref{lem:reduceL},
the worst-case instance $(\LINE, \CPROB, \CPROB)$ has support size (at
most) 3.
Let $x_1 \leq x_2 \leq x_3$ be the locations on the line.
By rescaling and mirroring, we may assume 
without loss of generality that $x_1 = 0$, $x_3 = 1$, and $x_2 > \half$.

If $x_2$ were not the median of the distribution,
then the socially better candidate would always win,
giving $\Social{\LINE}{\CPROB}{\CPROB} = 1$.
So in a worst-case distribution, $x_2$ must be the median,
and the socially worse candidate must win the election between $x_1$ and $x_3$.
Because $x_2 > \half$, $x_3$ is closer to the median, so he wins the
election between $x_1$ and $x_3$; therefore, $x_1$ must have lower
cost than $x_3$.
The expected distortion is
\[
\Social{\LINE}{\CPROB}{\CPROB} 
\; = \; (1 - 2 \CProb{1}\CProb{3}) \cdot 1
   + 2 \CProb{1}\CProb{3} \cdot \frac{\Cost{3}}{\Cost{1}} 
\; = \; (1 - 2 \CProb{1}\CProb{3}) 
   + 2 \CProb{1}\CProb{3} \cdot \frac{\CProb{1} + \CProb{2} (1-x_2)}{\CProb{2} x_2 + \CProb{3}}.
\]
This expression is monotonically decreasing in $x_2$
and monotonically increasing in $\CProb{1}$, 
so it is maximized when we take the limit 
$x_2 \to \half$ and $\CProb{1} \to \half$.
In particular,
\[ 
\Social{\LINE}{\CPROB}{\CPROB} 
\; \leq \; (1 - \CProb{3}) + \CProb{3} \cdot \frac{1/2 + \CProb{2}/2}{\CProb{2}/2 + \CProb{3}}
\; = \; (1 - \CProb{3}) + \CProb{3} \cdot \frac{3-2\CProb{3}}{1+2\CProb{3}}, \]
which is maximized at $\CProb{3} = \frac{\sqrt{2}-1}{2}$ 
(as in Example~\ref{ex:line-iid}),
where it attains a value of $4-2\sqrt{2}$.
\end{extraproof}


\section{Different Distributions}
\label{sec:different}

In this section, we prove a tight bound of 2 on the worst-case
distortion of voting, when two candidates are drawn
i.i.d.~from a distribution \CPROB which may be different from
the voter distribution~\VPROB.
This ratio is tight for both general metric spaces and the line,
and the lemmas we prove in this section apply to arbitrary metric spaces.

We begin with an example on the line (a variant of
Example~\ref{ex:line-three}) which establishes the lower bound of 2.
The candidate distribution \CPROB has probability 1/2 at position $-1$,
and the other $1/2$ at position 1.
The voter distribution \VPROB has a $(1/2 - \eps)$ fraction of the
voters at position $-1$, while the remaining voters are just to the
right of center at position $\eps > 0$.
With probability 1/2, we draw two different candidates, 
and the distortion is $3-O(\eps)$;
otherwise, we draw two candidates from the same location, 
getting a distortion of 1.
Therefore, the expected distortion of the instance is 
$2-O(\eps) \to 2$ as $\eps \to 0$.

The challenge is to establish the matching upper bound.
In proving the upper bound, some of the techniques we establish
  will be useful in Section~\ref{sec:metric}.

\begin{theorem} \label{thm:diff-dist}
For all instances $(\DIST, \CPROB, \VPROB)$, 
the expected distortion $\Social{\DIST}{\CPROB}{\VPROB}$ is at most $2$.
\end{theorem}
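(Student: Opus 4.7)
The plan is to upper bound $\Social{\DIST}{\CPROB}{\VPROB}$ in three stages: first establishing a standard per-pair cost inequality, then reducing the bound on the expected distortion to a sum over ``bad'' pairs (those whose election winner has strictly higher social cost than the other candidate), and finally bounding this sum.

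First, for any two candidates $i, i'$, writing $w = \Winner{i}{i'}$ and $o = \Opt{i}{i'}$, I will prove the standard inequality $\Cost{w} \le \Cost{o} + \frac{1}{2}\Dist{i}{i'}$ as follows. Partition the voters into the $w$-preferring set $S = \{j : \Dist{w}{j} \le \Dist{o}{j}\}$, which has $\VPROB$-mass at least $1/2$ since $w$ wins, and its complement. Voters in $S$ contribute non-positively to $\Cost{w} - \Cost{o}$, while for voters outside $S$ the triangle inequality gives $\Dist{w}{j} - \Dist{o}{j} \le \Dist{w}{o}$; since the complement has mass at most $1/2$, the total excess is at most $\Dist{w}{o}/2$. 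Dividing by $\Cost{o}$ then yields $\Ratio{i}{i'} - 1 \le \Dist{i}{i'}/(2\Cost{o})$, which is trivial when the winner is also optimal and nontrivial only for bad pairs.

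Next, expanding $\Social{\DIST}{\CPROB}{\VPROB}$ according to the formula from the preliminaries and collecting the good-pair contributions (where $\Ratio{i}{i'} = 1$),
\[
\Social{\DIST}{\CPROB}{\VPROB} - 1 \;\le\; \sum_{\{w,o\}\text{ bad}} \CProb{w}\CProb{o}\cdot\frac{\Dist{w}{o}}{\Cost{o}},
\]
so it suffices to show this sum is at most $1$; this is tight on the two-point lower-bound instance discussed right before the theorem, where the single bad pair contributes exactly $1$.

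The main obstacle is precisely this last inequality. My approach is to relate $\Dist{w}{o}$ to $\Cost{o}$ via the election outcome: the set $S_{w,o} = \{j : \Dist{w}{j} < \Dist{o}{j}\}$ has $\VProb{S_{w,o}} > 1/2$, and for every $j \in S_{w,o}$ the chained triangle inequality $\Dist{w}{o} \le \Dist{w}{j} + \Dist{j}{o} \le 2\Dist{j}{o}$ yields $\Dist{j}{o} \ge \Dist{w}{o}/2$. Integrating against $\VPROB$ gives $\Cost{o} \ge \VProb{S_{w,o}}\cdot\Dist{w}{o}/2$, and hence $\Dist{w}{o}/\Cost{o} \le 2/\VProb{S_{w,o}}$. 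Naively plugging in $\VProb{S_{w,o}} > 1/2$ only recovers the classical bound of $3$, so closing the gap to the tight constant $2$ will require a more delicate aggregation---most likely a charging-style argument that assigns each bad pair's mass $\CProb{w}\CProb{o}$ to the voter mass in $S_{w,o}$ and exploits the tension between having $1/\VProb{S_{w,o}}$ close to $2$ and $\sum_{\text{bad}}\CProb{w}\CProb{o}$ close to $1/2$ across many bad pairs, constrained by the global probability-mass budgets on $\CPROB$ and $\VPROB$ (as witnessed by the two-point lower-bound instance, where a single bad pair absorbs all the slack).
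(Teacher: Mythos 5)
Your first two steps are sound and essentially recover the paper's Lemma~\ref{lem:cost-bound}: the per-pair bound $\Cost{w}\le\Cost{o}+\frac{1}{2}\Dist{w}{o}$ together with $\Cost{o}\ge\frac12\VProb{S_{w,o}}\Dist{w}{o}$ and $\VProb{S_{w,o}}\ge\frac12$ gives $\Ratio{i}{i'}\le 3$, and your reduction to
\[
\Social{\DIST}{\CPROB}{\VPROB}-1\;\le\;\sum_{\{w,o\}\text{ bad}}\CProb{w}\CProb{o}\,\frac{\Dist{w}{o}}{\Cost{o}}
\]
is algebraically correct. But the proof is incomplete exactly where you flag it: you never establish that this sum is at most $1$, only that each summand is bounded by $4\CProb{w}\CProb{o}$, which together with $\sum_{\{w,o\}}\CProb{w}\CProb{o}\le\frac12$ gives $2$ on the right-hand side and hence only the classical distortion bound of $3$. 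The ``charging argument that assigns each bad pair's mass to the voter mass in $S_{w,o}$'' is an intention, not an argument; nothing in the sketch pins down how the sets $S_{w,o}$ interact across different bad pairs or how the constraint $\sum_j\VProb{j}=1$ is actually spent, and it is not clear that this route can be made to work.

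The paper closes the gap by an approach that is structurally quite different from what you propose: after deriving the per-pair cap $\Ratio{i}{j}\le 3$, it abandons the metric and the voter distribution entirely. Lemma~\ref{lem:diff-with-cap} treats the social costs $\Cost{1}\le\cdots\le\Cost{\NUMCAND}$ as arbitrary positive numbers, observes that pairs whose costs differ by more than a factor $\alpha$ necessarily elect the better candidate (else the cap would be violated), and so upper-bounds $\Social{\DIST}{\CPROB}{\VPROB}-1$ by the purely combinatorial quantity $2\sum_{i<j\le\Last{i}}\CProb{i}\CProb{j}(\Cost{j}/\Cost{i}-1)$ which assumes the worse candidate wins whenever costs are within factor $\alpha$. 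It then shows by two mass-moving steps (first shifting $\CPROB$-mass to make all supported costs within factor $\alpha$ of one another, then exploiting convexity in each intermediate $\Cost{k}$ to merge support points) that the maximum is attained at a two-point support, where the value is $2p(1-p)(\alpha-1)\le\frac{\alpha-1}{2}$; taking $\alpha=3$ gives the bound of $2$. The contrast is that the paper's argument needs no per-pair geometric information beyond the single cap $\alpha$, whereas your final step tries to keep using the voter-mass structure of each pair's winning set, which is precisely the information the paper's reduction shows you do not need. In short: your setup is fine and your per-pair inequality is the right one, but the aggregation step is the whole difficulty, and the paper resolves it by a reduction to a metric-free optimization problem rather than by the pair-by-pair charging scheme you envision.
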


The overall proof structure is as follows.
First, we show in Lemma~\ref{lem:cost-bound} that if 
$i = \Winner{i}{i'}$, then $\Cost{i} \leq 3\Cost{i'}$. 
That is, while the election winner can be socially worse, he cannot be
too much worse.\footnote{Lemma~\ref{lem:cost-bound} is a
special case of the more general result \cite[Theorem 4]{anshelevich:bhardwaj:postl};
we present a self-contained proof here for completeness.}
Lemma~\ref{lem:cost-bound} is the only place where we use the metric
structure and the voter distribution.
Subsequently, we rewrite the social cost function
\Social{\DIST}{\CPROB}{\VPROB} accordingly, 
and then treat the costs as completely arbitrary numbers.

Second, in Lemma~\ref{lem:diff-with-cap},
we prove that if all pairwise elections have
distortion at most $1 \le \alpha \le 3$,
then $\Social{\DIST}{\CPROB}{\VPROB} \le (1+\alpha)/2$.
(While in this section, we will only use the lemma with
  $\alpha = 3$, the version with general $\alpha$ constitutes a key
  step in Section~\ref{sec:metric}.)

\begin{lemma}[\cite{anshelevich:bhardwaj:postl}]
\label{lem:cost-bound}
Let $i = \Winner{i}{i'}$. Then, $\Cost{i} \leq 3\Cost{i'}$.
\end{lemma}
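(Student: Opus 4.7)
The plan is to combine the triangle inequality with the majority condition in two steps. Let $S = \{j : \Dist{i}{j} \leq \Dist{i'}{j}\}$ denote the set of voters who prefer $i$; since $i$ wins the election, $\VProb{S} \geq \tfrac{1}{2}$.

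The first step --- the one where the majority condition plays the crucial role --- is to bound the inter-candidate distance $\Dist{i}{i'}$ in terms of $\Cost{i'}$. For every $j \in S$, the triangle inequality together with $\Dist{i}{j} \leq \Dist{j}{i'}$ gives $\Dist{i}{i'} \leq \Dist{i}{j} + \Dist{j}{i'} \leq 2\Dist{j}{i'}$. Taking the \VProb-weighted average over $j \in S$ and using $\VProb{S} \geq \tfrac{1}{2}$ yields
\[ \Dist{i}{i'} \; \leq \; \frac{2}{\VProb{S}}\sum_{j \in S}\VProb{j}\Dist{j}{i'} \; \leq \; \frac{2}{\VProb{S}}\Cost{i'} \; \leq \; 4\Cost{i'}. \]

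In the second step, I would bound $\Cost{i}$ by splitting voters into $S$ and $\bar S$. On $S$, use $\Dist{i}{j} \leq \Dist{i'}{j}$ directly; on $\bar S$, use the triangle inequality $\Dist{i}{j} \leq \Dist{i}{i'} + \Dist{i'}{j}$. Summing with weights $\VProb{j}$ produces $\Cost{i} \leq \Cost{i'} + \VProb{\bar S}\,\Dist{i}{i'} \leq \Cost{i'} + \tfrac{1}{2}\Dist{i}{i'}$. Plugging in the bound from the first step then gives $\Cost{i} \leq \Cost{i'} + 2\Cost{i'} = 3\Cost{i'}$.

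There is no real obstacle here; the only delicate point is that the majority condition is invoked in two distinct places --- once to turn a per-voter inequality into a uniform upper bound on $\Dist{i}{i'}$, and once to control $\VProb{\bar S}$ --- and it is the multiplicative interplay of these two applications that yields the final constant $1+2=3$, which is tight by Example~\ref{ex:line-three}.
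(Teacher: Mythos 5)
Your proof is correct and follows essentially the same approach as the paper's: two applications of the triangle inequality combined with the majority condition, yielding the tight constant of $3$. The only difference is organizational — you extract the explicit intermediate bound $\Dist{i}{i'} \leq 4\Cost{i'}$ by averaging over voters who prefer $i$, whereas the paper keeps the pointwise inequality $\Dist{i}{i'} \leq 2\Dist{i'}{j}$ inside the sum and finishes with a single coefficient swap (moving the $\Dist{i}{i'}$ term from the minority of voters to the majority), which amounts to the same use of $\VProb{S} \geq \tfrac{1}{2}$ that you split across your two steps.
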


\begin{emptyproof}
In the following derivation, we will use that:
\begin{itemize}
\item Because $i$ beats $i'$, at least half of the voters are at least
  as close to $i$ as to $i'$.
\item For any voter $j$ who is at least as close to $i$ as to $i'$, 
  the triangle inequality implies that 
$\Dist{i'}{i} \leq \Dist{i'}{j} + \Dist{j}{i} \leq 2 \Dist{i'}{j}$.
\end{itemize}
Then, we can bound \Cost{i} as follows:
\begin{align*}
\Cost{i}
& = \sum_{j: \Dist{i}{j} \leq \Dist{i'}{j}} \VProb{j} \cdot \Dist{i}{j}
  + \sum_{j: \Dist{i}{j} > \Dist{i'}{j}} \VProb{j} \cdot \Dist{i}{j} \\
& \stackrel{\bigtriangleup-\text{inequality}}{\leq}
    \sum_{j: \Dist{i}{j} \leq \Dist{i'}{j}} \VProb{j} \cdot \Dist{i'}{j}
  + \sum_{j: \Dist{i}{j} > \Dist{i'}{j}} \VProb{j} \cdot (\Dist{i'}{j} + \Dist{i}{i'})\\
& \stackrel{i \text{ beats } i'}{\leq}
    \sum_{j: \Dist{i}{j} \leq \Dist{i'}{j}} \VProb{j} \cdot (\Dist{i'}{j} + \Dist{i}{i'})
  + \sum_{j: \Dist{i}{j} > \Dist{i'}{j}} \VProb{j} \cdot \Dist{i'}{j}\\
& \leq 
    \sum_{j: \Dist{i}{j} \leq \Dist{i'}{j}} \VProb{j} \cdot (3 \Dist{i'}{j})
  + \sum_{j: \Dist{i}{j} > \Dist{i'}{j}} \VProb{j} \cdot \Dist{i'}{j}\\
& \leq 3 \Cost{i'}.\QED
\end{align*}
\end{emptyproof}

\begin{lemma}
\label{lem:diff-with-cap}
For any $1 \le \alpha \le 3$ and any instance $(\DIST, \CPROB, \VPROB)$,
  if  $\Ratio{i}{j} = \frac{\Cost{\Winner{i}{j}}}{\Cost{\Opt{i}{j}}} \le \alpha$ for all $(i, j)$,
  then $\Social{\DIST}{\CPROB}{\VPROB} \le \frac{1+\alpha}{2}$.
\end{lemma}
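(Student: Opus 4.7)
The plan is to reduce the bound to a two-candidate support. Since $\Ratio{i}{j}=1$ whenever the winner is socially optimum, and $\sum_{i,j}\CProb{i}\CProb{j}=1$, I can rewrite
$$\Social{\DIST}{\CPROB}{\VPROB} \;=\; 1 \,+\, 2\sum_{\{i,j\}\text{ bad}} \CProb{i}\CProb{j}\,(\Ratio{i}{j}-1),$$
where a pair is \emph{bad} iff the winner's cost strictly exceeds the optimum's. The target bound $(1+\alpha)/2$ is equivalent to bounding the right-hand sum by $(\alpha-1)/4$.

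The naive estimate $\Ratio{i}{j}-1\le\alpha-1$ combined with $\CProb{i}\CProb{j}\le 1/4$ already handles supports of size two: there is just one potentially bad pair, giving $\Social{\DIST}{\CPROB}{\VPROB}\le 1+(\alpha-1)/2$, and the 2-candidate example with equal probabilities is tight. For larger supports the naive mass bound can fail---three candidates whose pairwise elections form a cost-increasing cycle can push the total bad-pair mass well above $1/4$. The extra leverage I would exploit is cost consistency: for any three candidates with $\Cost{i}\le\Cost{j}\le\Cost{k}$ whose pairwise elections are all bad, $\Ratio{i}{k}=\Ratio{i}{j}\cdot\Ratio{j}{k}$, so the hypothesis $\Ratio{i}{k}\le\alpha$ constrains the \emph{product} of the edge ratios; via $(1+x)(1+y)\ge 1+x+y$ for $x,y\ge 0$ this already bounds the sum $(\Ratio{i}{j}-1)+(\Ratio{j}{k}-1)$ by $\alpha-1$.

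My plan is then to reduce any $\CPROB$ with three or more support points to a two-point support via a local shift argument analogous to those in Section~\ref{sec:line}. Concretely, pick a candidate $j$ strictly between two support extremes $i,k$ in the cost order, and consider moving all of $\CProb{j}$'s mass onto one endpoint. A direct computation gives
$$\Delta_{j\to k} \;=\; 2\CProb{j}\,(\Ratio{j}{k}-1)\,(\CProb{i}\Ratio{i}{j}-\CProb{k}),$$
and an analogous formula $\Delta_{j\to i} = 2\CProb{j}(\Ratio{i}{j}-1)(\CProb{k}\Ratio{j}{k}-\CProb{i})$ for the other direction. If both were negative, multiplying the two sign conditions would yield $\Ratio{i}{j}\Ratio{j}{k}<1$, contradicting $\Ratio\ge 1$; hence at least one shift weakly increases $\Social{\DIST}{\CPROB}{\VPROB}$. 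Iterating reduces the support to two candidates without decreasing the expected distortion, at which point
$$\Social{\DIST}{\CPROB}{\VPROB} \;=\; 1 + 2\CProb{i}\CProb{j}(\Ratio{i}{j}-1) \;\le\; 1 + \frac{\alpha-1}{2} \;=\; \frac{1+\alpha}{2}.$$

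The main obstacle is ``mixed'' configurations in which some pair among $i,j,k$ is good (ratio exactly $1$), because the cost-consistency identity $\Ratio{i}{k}=\Ratio{i}{j}\Ratio{j}{k}$ breaks and a single shift can strictly decrease $\Social{\DIST}{\CPROB}{\VPROB}$. In such cases I would bypass the reduction by a direct bound: when $\Ratio{i}{k}=1$, the contributions from $(i,j)$ and $(j,k)$ factor as $2\CProb{j}\bigl(\CProb{i}(\Ratio{i}{j}-1)+\CProb{k}(\Ratio{j}{k}-1)\bigr)\le 2(\alpha-1)\CProb{j}(1-\CProb{j})\le (\alpha-1)/2$, immediately giving the desired bound. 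Extending this careful case analysis to arbitrary support size (so that whichever subset of pairs is good, the contribution still factors through an AM-GM-style step on $\CProb{j}(1-\CProb{j})$) is the bulk of the technical work.
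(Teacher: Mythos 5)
Your overall plan (reduce to a two-point support by shifting candidate probability mass, then finish with $1+2\CProb{1}\CProb{2}(\alpha-1)\le\frac{1+\alpha}{2}$) is in the same spirit as the paper's proof, but as written it has genuine gaps. First, the displayed shift formulas $\Delta_{j\to k}=2\CProb{j}(\Ratio{j}{k}-1)(\CProb{i}\Ratio{i}{j}-\CProb{k})$ and its mirror are valid only when the support is exactly $\{i,j,k\}$ \emph{and} all three pairs are bad, so that $\Ratio{i}{k}=\Ratio{i}{j}\Ratio{j}{k}$. For support size $n\ge 4$, moving all of $\CProb{j}$ onto $k$ changes every term involving $j$ or $k$, giving $\Delta_{j\to k}=2\CProb{j}\bigl[\sum_{l\ne j,k}\CProb{l}(\Ratio{k}{l}-\Ratio{j}{l})-\CProb{k}(\Ratio{j}{k}-1)\bigr]$; this does not factor, so the sign argument (multiply the two negativity conditions to contradict $\Ratio{i}{j}\Ratio{j}{k}\ge 1$) does not go through as stated and would need a genuinely new argument. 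Second, the mixed configurations in which some pairs are good are handled only for one specific three-point pattern ($\Ratio{i}{k}=1$ with the other two pairs bad); you yourself defer the extension to arbitrary support and arbitrary good/bad patterns as ``the bulk of the technical work,'' and that is precisely the part that is missing. So the proposal is an incomplete sketch rather than a proof.

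The paper sidesteps this case analysis entirely by a relaxation: since any pair whose costs differ by more than a factor $\alpha$ must elect the better candidate (by the hypothesis $\Ratio{i}{j}\le\alpha$), and every term is nonnegative, one can upper-bound $\Social{\DIST}{\CPROB}{\VPROB}-1$ by $\CSoc{\CPROB}{\COST}=2\sum_{i<j\le\Last{i}}\CProb{i}\CProb{j}\left(\frac{\Cost{j}}{\Cost{i}}-1\right)$, i.e., pretend the socially worse candidate wins \emph{every} election whose costs are within a factor $\alpha$. This decouples the objective from the actual tournament of winners, so no good/bad bookkeeping is needed; the support is then reduced by (a) shifting probability between pairs more than a factor $\alpha$ apart, where the objective is linear in the shifted mass because the bilinear term is absent, and (b) merging costs using convexity of $\beta_1+\beta_2\Cost{2}+\beta_3/\Cost{2}$ in $\Cost{2}$, until only two points at most a factor $\alpha$ apart remain. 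If you want to salvage your route, the cleanest fix is to perform this worst-case-outcome relaxation first; your shift and convexity manipulations then only ever need to handle the all-bad pattern, where your multiplicativity identity actually holds.
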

\begin{emptyproof}

Consider an instance $(\DIST, \CPROB, \VPROB)$ and its associated costs \COST. 
Without loss of generality, assume that 
$\Cost{1} \leq \Cost{2} \leq \cdots \leq \Cost{\NUMCAND}$.
For each candidate $i$, let $\Last{i} = \max \Set{j}{\Cost{j} \leq \alpha \Cost{i}}$.
Notice that by the assumption that $\Ratio{i}{j} \le \alpha$ for all $i,j$,
whenever $j > \Last{i}$, 
we have that $\Winner{i}{j} = \Opt{i}{j}$, resulting in a cost ratio of 1. 
We can therefore bound the expected distortion (minus 1) as follows:
\begin{align}
\Social{\DIST}{\CPROB}{\VPROB} - 1
& \leq 2 \sum_{i < j \leq \Last{i}} \CProb{i} \CProb{j} \cdot \left(\frac{\Cost{j}}{\Cost{i}} - 1\right)
\; =: \; \CSoc{\CPROB}{\COST}.
\label{eqn:csoc}
\end{align}

The upper bound \CSoc{\CPROB}{\COST} assumes that the worse candidate
wins whenever the two candidates' social costs are within a factor of
$\alpha$ of each other.
Note that this upper bound \CSoc{\CPROB}{\COST} makes no more
reference to distances or voter distributions.
It depends on a distribution over candidates and a cost vector,
both of which can be arbitrary,
and it assumes that all elections whose candidates' costs are more
than a factor $\alpha$ apart choose the socially better candidate,
while all other elections choose the socially worse candidate.

We will now argue that \CSoc{\CPROB}{\COST} is at most $\frac{\alpha - 1}{2}$. 
First, we show that the expression is maximized by moving
probability mass so that $c_i$ and $c_j$ are at most a factor $\alpha$
apart for every $i$ and $j$ in the support of \CPROB. 
Suppose that there exists a pair $i < j$ in the support of \CPROB with 
$j > \Last{i}$, i.e., with $\Cost{j} > \alpha \Cost{i}$.
Consider moving $\epsilon$ probability mass from \CProb{i} to \CProb{j},
where a negative value of $\epsilon$ moves probability mass from
\CProb{j} to \CProb{i}; 
call the resulting probability vector $\CPROB(\epsilon)$.
Because our choice of $i$ and $j$ avoids the bilinear term $p_i p_j$ in
\eqref{eqn:csoc}, 
\CSoc{\CPROB(\epsilon)}{\COST} is a linear function of $\epsilon$.
Therefore, the expression is maximized at an extreme, 
i.e., by moving all the probability mass from one of $i$
and $j$ to the other. 

Once all points in the support of \CPROB are at most a
factor $\alpha$ apart in social cost, 
the expression for \CSoc{\CPROB}{\COST} in \eqref{eqn:csoc} becomes a
sum over all pairs of points. 
Assume that the support of \CPROB has size $n' \geq 3$, and associated
costs $\Cost{1} < \Cost{2} < \cdots < \Cost{n'}$. 
(The inequalities can be assumed to be strict, because two points
$i,i'$ with the same cost can be merged without affecting the value
\CSoc{\CPROB}{\COST}.)
Considering all terms except \Cost{2} as constants, 
\CSoc{\CPROB}{\COST} is of the form
$\beta_1 + \beta_2 \Cost{2} + \beta_3/\Cost{2}$
(with $\beta_2, \beta_3 \geq 0$),
which is convex in \Cost{2}.
In particular, it attains its maximum at
$\Cost{2} = \Cost{1}$ or $\Cost{2} = \Cost{3}$.
In either case, we can merge the probability mass of point 2 with 
1 or 3, reducing the support size by 1 without decreasing
\CSoc{\CPROB}{\COST}.
By repeating such merges, we eventually arrive at a distribution with
support size $2$ and $\Cost{2} \leq \alpha \Cost{1}$.
Finally, we can bound
\[
\Social{\DIST}{\CPROB}{\VPROB}
\; = \; 1 + \CSoc{\CPROB}{\COST}
\; \leq \; 1 + 2 \CProb{1} (1-\CProb{1}) \cdot (\alpha - 1)
\; \leq \; 1 + \half (\alpha-1)
\; = \; \frac{1+\alpha}{2}.\QED
\]
\end{emptyproof}




\newcommand{\RatioBoverI}{\frac{1-\CProb{}+\Radius{}+\CProb{}\Radius{}}{\CProb{}(1-\Radius{})}}
\newcommand{\RatioIoverB}{\frac{3+\Radius{}}{1-\Radius{}}}
\newcommand{\RatioIwinsOverB}{\frac{2}{1-\Radius{}}}
\newcommand{\RatioIwinsOverJ}{\RatioIwinsOverB \cdot \RatioBoverI}

\section{Identical Distributions in General Metric Spaces}
\label{sec:metric}

In this section, we examine the setting where the underlying metric
space is arbitrary, and the candidates are drawn independently from
the population of voters.
We establish the following main theorem:

\begin{theorem} \label{thm:main-general-metric}
The worst-case distortion 
$\sup_{(\DIST, \CPROB, \CPROB)} \Social{\DIST}{\CPROB}{\CPROB}$
is between $\frac{3}{2}$ and $2 - \frac{1}{652}$.
\end{theorem}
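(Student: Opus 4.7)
\emph{Lower bound.}
I formalize the example sketched at the start of Section~\ref{sec:metric}. For parameters $n$ large and $\epsilon, \delta > 0$ small, place mass $\CProb{i_0} = \frac{1}{2} - \epsilon$ at a ``center'' point $i_0$ and mass $(\frac{1}{2} + \epsilon)/n$ at each of $n$ ``peripheral'' points $i_1, \ldots, i_n$; define $\Dist{i_0}{i_k} = 1$ for $k \geq 1$ and $\Dist{i_k}{i_l} = 1 - \delta$ for $k \neq l$. The triangle inequality is easily verified. A direct computation gives $\Cost{i_0} = \frac{1}{2} + \epsilon$ and $\Cost{i_k} \to 1$ as $n \to \infty$ and $\epsilon, \delta \to 0^+$. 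In every mixed election $i_0$ vs.\ $i_k$, the $n - 1$ peripheral voters at $i_l$ with $l \neq k$ are strictly closer to $i_k$ (distance $1 - \delta$) than to $i_0$ (distance $1$); hence $i_k$ wins with vote-mass $\frac{1}{2} + \epsilon$, yielding pairwise distortion tending to $2$. All other elections return the socially better candidate and contribute distortion~$1$. Aggregating, $\Social{\DIST}{\CPROB}{\CPROB} \to 1 + 2 \CProb{i_0}(1 - \CProb{i_0}) \to \frac{3}{2}$.

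\emph{Upper bound, baseline.}
Lemma~\ref{lem:cost-bound} bounds every pairwise distortion by $3$, and applying Lemma~\ref{lem:diff-with-cap} with $\alpha = 3$ already yields $\Social{\DIST}{\CPROB}{\CPROB} \leq 2$. To improve this to $2 - \frac{1}{652}$, I will exploit the additional constraint $\CPROB = \VPROB$, which ties the cost vector to the candidate distribution via $\Cost{i} = \sum_j \CProb{j} \Dist{i}{j}$. In particular, the cost vector is $1$-Lipschitz in the metric, and the worst-case cost configuration inside Lemma~\ref{lem:diff-with-cap} (two candidates at cost ratio exactly $3$ with probabilities $\frac{1}{2}$ each) is itself unrealizable when $\CPROB = \VPROB$.

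\emph{Refinement strategy.}
My plan is to pick a threshold $\alpha^* \in (1, 3)$ and split election pairs into ``tame'' ones with distortion at most $\alpha^*$ and ``extreme'' ones with distortion in $(\alpha^*, 3]$. The tame pairs are handled by reapplying Lemma~\ref{lem:diff-with-cap} at $\alpha^*$, contributing at most $(1 + \alpha^*)/2$. For an extreme pair $(i, j)$ with $\Cost{j}/\Cost{i}$ close to $3$, the proof of Lemma~\ref{lem:cost-bound} forces both its triangle-inequality step and its ``$i$ beats $j$'' step to be nearly tight, pinning down a large voter-mass to lie close to $i$ and far from $j$, at distance roughly $2\,\Cost{i}$ from $j$. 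Because $\CPROB = \VPROB$, this mass simultaneously populates the candidate pool as low-cost candidates; they beat $j$ in their own pairwise elections and thereby absorb $\CProb{j}$-weight into distortion-$1$ contributions, suppressing the weight of extreme pairs in $\Social{\DIST}{\CPROB}{\CPROB}$.

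\emph{Main obstacle.}
The principal technical difficulty is the quantitative bookkeeping needed to recover an explicit constant like $\frac{1}{652}$. I expect to parameterize by $\alpha^*$, by the total probability mass placed on high-cost candidates, and by the slack in the Lemma~\ref{lem:cost-bound} chain, and then reduce to a finite-dimensional optimization that balances the tame bound $(1 + \alpha^*)/2$ against the extreme-pair contribution. The subtle part will be showing that the two regimes cannot simultaneously match the trivial bound $2$, which is precisely what yields a strict constant improvement; the specific value $\frac{1}{652}$ should drop out as the optimum of this trade-off.
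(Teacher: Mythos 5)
Your lower bound is correct and is essentially the paper's own construction: a heavy ``center'' point plus $n$ near-uniform peripheral points at distance $1$ from the center and $1-\delta$ from each other, giving mixed pairwise distortion $\to 2$ on the $\to \tfrac12$ probability mass of mixed elections, hence $\Social{\DIST}{\CPROB}{\CPROB} \to \tfrac32$. No issues there.

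The upper bound is where you have a genuine gap. Your ``baseline'' (combine Lemma~\ref{lem:cost-bound} with Lemma~\ref{lem:diff-with-cap} at $\alpha=3$ to get $2$) is fine, and your intuition that the constraint $\CPROB=\VPROB$ makes the worst-case configuration of Lemma~\ref{lem:diff-with-cap} unrealizable is the right one. But from there you only sketch a plan: split pairs into tame and extreme, argue that tightness in the Lemma~\ref{lem:cost-bound} chain forces a large voter mass near the optimal candidate, then ``reduce to a finite-dimensional optimization.'' You never carry this out, and the sketch as stated does not obviously close. In particular, your argument is local to a single extreme pair $(i,j)$: you claim the low-cost mass near $i$ ``absorbs $\CProb{j}$-weight into distortion-$1$ contributions.'' That bounds elections between $j$ and the near-$i$ cluster, but says nothing about the (potentially large) collection of elections between $j$ and medium-cost candidates, nor about other extreme pairs that don't involve $i$ at all. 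The paper's actual mechanism is global: it proves a structural lemma (Lemma~\ref{lem:two-minus-eps}) stating that if even one pair has distortion $\geq 3-\delta$, then the \emph{entire} instance has expected distortion at most $\tfrac32 + 9\sqrt{\delta}$, because nearly half the mass is forced into a small ball $B$ near the optimum of that pair, nearly half into a shell $A$ roughly equidistant from both, and one can then bound $\Ratio{i}{b}+\Ratio{j}{b}+\Ratio{i}{j}$ for all $i,j\in A$, $b\in B$ via an explicit case analysis (Lemma~\ref{lem:delta_ijb}). Combining this with Lemma~\ref{lem:diff-with-cap} at $\alpha=3-\delta$ in the complementary case and choosing $\delta=\tfrac{1}{326}$ is what produces the concrete constant $2-\tfrac{1}{652}$. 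So the missing piece is precisely the quantitative structural lemma you defer to the ``main obstacle'' paragraph; without it (or some substitute global argument), the proposal establishes only the bound $2$, not the claimed strict improvement.
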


Key to the upper bound portion of this theorem is the following lemma.
\begin{lemma} \label{lem:two-minus-eps}
Assume that $\delta \leq \frac{1}{100}$.
Let $(\DIST, \CPROB, \CPROB)$ be an instance with maximum pairwise
distortion (exactly) $3-\delta$.
Then, $\Social{\DIST}{\CPROB}{\CPROB} \le \frac{3}{2} + 9 \sqrt{\delta}$.
\end{lemma}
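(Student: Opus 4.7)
My plan is to exploit near-tightness in the proof of Lemma~\ref{lem:cost-bound} applied to a witness pair $(A,B)$ with $\Ratio{A}{B}=3-\delta$, then combine the resulting structural information with a Lemma~\ref{lem:diff-with-cap}-style argument using cap $\alpha=2$. Fix such a witness pair and normalize so $\Cost{B}=1$. The chain of inequalities in the proof of Lemma~\ref{lem:cost-bound} contains four nonnegative slacks that sum to exactly $\delta$, so each is individually at most $\delta$. Translating these four slack bounds yields: (a) $\VProb{V_A},\VProb{V_B} = \tfrac12 \pm O(\delta)$; (b) $\Dist{A}{B}\in[4-O(\delta),4]$; (c) $\sum_{j\in V_B}\VProb{j}\,\Dist{j}{B}=O(\delta)$, so $V_B$-voters are mass-concentrated near $B$; (d) for $V_A$-voters, both $\Dist{j}{A}$ and $\Dist{j}{B}$ lie near $\Dist{A}{B}/2$ on weighted average; and (e) the total mass within any radius $t<\Dist{A}{B}/2$ of $A$ is at most $O(\delta/(\Dist{A}{B}-2t))$. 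Markov's inequality applied to (c)--(e) at threshold $\sqrt{\delta}$ then pins down two ``good'' clusters of voters (and, since $\CPROB=\VPROB$, of candidates) --- $W_B$ near $B$ and $W_M$ near a midpoint of $A,B$, each of mass $\tfrac{1}{2}\pm O(\sqrt{\delta})$ --- with the remaining ``bad'' mass only $O(\sqrt{\delta})$ and the mass within $\sqrt{\delta}$ of $A$ only $O(\delta)$.

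The crux is then a structural bound showing that the total probability of drawing a pair $(i,j)$ with $\Ratio{i}{j}>2$ is $O(\sqrt{\delta})$. Applying the same slack decomposition to any such bad pair, the four slacks sum to $3\Cost{\Opt{i}{j}}-\Cost{\Winner{i}{j}}<\Cost{\Opt{i}{j}}$, forcing $(i,j)$ itself to exhibit the two-half-cluster structure: a near-half-cluster of voters around the loser $\Opt{i}{j}$ and another around the midpoint of $i$ and $j$. Since the instance already concentrates essentially all of its mass in the two near-half-clusters $W_B$ and $W_M$, these must coincide (up to $\sqrt{\delta}$) with the loser- and midpoint-clusters of the bad pair, up to a swap. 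In the un-swapped case the loser lies in $W_B$ and the $(i,j)$-midpoint in $W_M$, forcing $j$ to lie within $O(\sqrt{\delta})$ of $A$; by (e) the mass of such pairs is only $O(\delta)$. In the swapped case the loser $i$ lies in $W_M$ and the $(i,j)$-midpoint lies in $W_B$, so $j$ sits at distance $\approx\Dist{A}{B}/2$ from $B$ on the far side from $i$; since $W_M$ is a single near-half-cluster and $W_B$ carries the only other significant mass, $j$ cannot lie in either $W_B$ or $W_M$ and must sit in the bad region of total mass $O(\sqrt{\delta})$, bounding this case too.

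To finish, I plug these estimates into the template of Lemma~\ref{lem:diff-with-cap} with $\alpha=2$: the contribution to $\Social{\DIST}{\CPROB}{\CPROB}-1$ from pairs with $\Ratio{i}{j}\le 2$ is at most $\tfrac{1}{2}$ by that lemma, while the remaining pairs contribute at most $(r-2)\cdot O(\sqrt{\delta})=O(\sqrt{\delta})$, summing to $\tfrac{3}{2}+O(\sqrt{\delta})$. Tracking the constants through the Markov thresholds and the two bad-pair cases yields the stated $\tfrac{3}{2}+9\sqrt{\delta}$, and the hypothesis $\delta\le 1/100$ is exactly what keeps the hidden constants (e.g.\ the lower bound $4-O(\delta)$ on $\Dist{A}{B}$ and the constants in Markov's inequality) comfortably under control. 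I expect the hardest step to be the swapped case of the bad-pair analysis: ruling out a genuine second cluster of ``mirror midpoints'' across $B$ in an arbitrary metric requires the total-mass argument sketched above rather than any one-dimensional picture, and it is precisely this step that introduces the $\sqrt{\delta}$ (rather than linear $\delta$) dependence in the final bound.
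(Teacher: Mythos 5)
The overall architecture you propose---extract cluster structure from near-tightness of Lemma~\ref{lem:cost-bound} at the witness pair, choose a parameter $p=\tfrac12-O(\sqrt\delta)$ via a Markov/threshold argument to get radius $O(\sqrt\delta)$ clusters, and then conclude with a cap-$\alpha$ argument a la Lemma~\ref{lem:diff-with-cap}---is close in spirit to the paper's proof, which partitions the points into $A$ (near-midpoint), $B$ (near the socially better witness), and $C$ (the remainder, of mass $O(\sqrt\delta)$). Your steps (a)--(e) are essentially the paper's derivation of $\Radius{A}+\Radius{B}\le\rho$ from $\Cost{x}\le\frac{1}{2-\delta}$.

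Where your proposal and the paper diverge, and where there is a genuine gap, is the crux step. You want to show that the total probability of drawing a pair with $\Ratio{i}{j}>2$ is $O(\sqrt\delta)$, and your argument for this is that ``applying the same slack decomposition to any such bad pair, the four slacks sum to $3\Cost{\Opt{i}{j}}-\Cost{\Winner{i}{j}}<\Cost{\Opt{i}{j}}$, forcing $(i,j)$ itself to exhibit the two-half-cluster structure.'' This inference does not hold. For the witness pair, the slacks sum to $\delta\cdot\Cost{x}$, so each slack is $O(\delta)$ times a cost; that is what forces near-half mass at $x$, near-equidistance, etc. For a generic pair with $\Ratio{i}{j}$ only slightly above $2$, the slacks merely sum to something strictly less than $\Cost{\Opt{i}{j}}$, i.e., each slack is bounded by a quantity of constant order. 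In particular, the slack corresponding to $s_4=2\sum_{j\notin S}q_j\Dist{\Opt{}}{j}<\Cost{\Opt{}}$ only says that voters preferring the optimal candidate are at average distance $<\Cost{\Opt{}}$ from it, which is not a cluster at all (it is the same order of magnitude as the average distance of the whole population). So a bad pair does not carry its own two-cluster structure, and the subsequent matching of its ``loser-cluster'' and ``midpoint-cluster'' to $W_B$ and $W_M$ ``up to a swap'' has nothing to match against. As a result, the two-case analysis (un-swapped/swapped) that you rely on to bound the bad-pair mass is not grounded, and the claim ``bad-pair probability $=O(\sqrt\delta)$'' remains unproven.

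The paper avoids this difficulty entirely: rather than trying to show that $A\times A$ and $A\times B$ pairs are never ``bad,'' it bounds the \emph{sum} $\Delta_{i,j,b}=\Ratio{i}{b}+\Ratio{j}{b}+\Ratio{i}{j}$ by roughly $5$ (Lemma~\ref{lem:delta_ijb}), exploiting approximate versions of ``every $b\in B$ is near-optimal,'' ``$\Cost{i}\le 3\Cost{b}$,'' and ``if $i$ beats $b$ then $\Cost{i}\le 2\Cost{b}$,'' together with a convexity argument in the middle cost. This absorbs pairs of distortion slightly above $2$ without needing to show they are rare. Your cap-$2$ decomposition would in fact need the cap to be $2+O(\sqrt\delta)$ for $A\times A$ and $A\times B$ pairs anyway (their distortion can genuinely exceed $2$ by $O(\rho)$), and establishing even that weaker bound needs the paper's key structural properties, not the slack argument for arbitrary bad pairs. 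Finally, your reduction to Lemma~\ref{lem:diff-with-cap} ``restricted to pairs with $\Ratio{i}{j}\le2$'' requires some care to state correctly (the lemma has a global hypothesis), though the version obtained by capping $\hat C$ does work; that part is a presentational issue, whereas the bad-pair-mass claim is a substantive one.
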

We prove Lemma~\ref{lem:two-minus-eps} in Section~\ref{pf:two-minus-eps}.
That proof relies on the following structural characterization:
if a pair of candidates has distortion $3-\delta$ for sufficiently
small $\delta$, then the instance must be very structured:
nearly half the probability mass must be
concentrated very close to the socially optimal candidate,
and most of the remaining candidates must be nearly equidistant to
the two candidates.

\begin{extraproof}{Theorem~\ref{thm:main-general-metric}}
We begin by proving the lower bound, by constructing a family of
instances whose distortion converges to $\frac{3}{2}$.
We label the $\NUMCAND+1$ points $\{0, 1, \ldots, \NUMCAND\}$.
We set $\CProb{0} = \frac{1-\eps}{2}$, 
and all other $\CProb{i} = \frac{1+\eps}{2\NUMCAND}$.
The distances\footnote{To avoid tie breaking issues, consider the
  distances as perturbed by distinct and very small amounts.} 
are $\Dist{0}{i} = 1$ for all $i > 0$, 
and $\Dist{i}{j} = 1-\eps$ for all $i,j > 0$.
See Figure \ref{fig:metric_worst} for an illustration.

\begin{figure}[h]
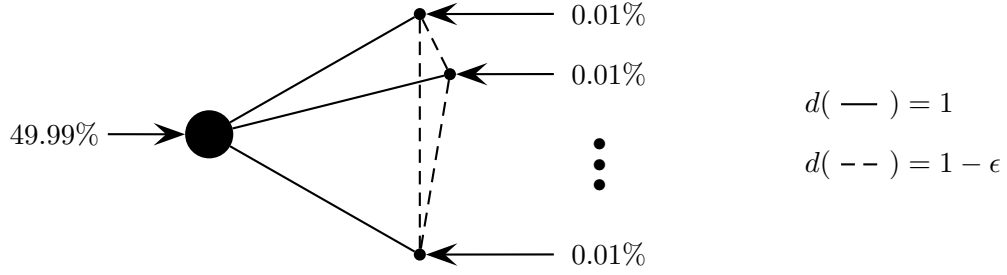

\centering
\psset{unit=0.8cm,arrowsize=0.2 5}
\pspicture(-7,-2.5)(7,3)
\cnode[fillstyle=solid,fillcolor=black](-4.5,0){0.4}{Big}
\cnode[fillstyle=solid,fillcolor=black](-1,2){0.1}{Sml1}
\cnode[fillstyle=solid,fillcolor=black](-0.5,1){0.1}{Sml2}
\cnode[fillstyle=solid,fillcolor=black](-1,-2){0.1}{Sml3}
\ncline{Big}{Sml1}
\ncline{Big}{Sml2}
\ncline{Big}{Sml3}
\ncline[linestyle=dashed]{Sml1}{Sml2}
\ncline[linestyle=dashed]{Sml3}{Sml2}
\ncline[linestyle=dashed]{Sml1}{Sml3}

\rput(-7,0){\rnode{BigP}{49.99\%\;}}
\ncline[arrows=->]{BigP}{Big}
\rput(2,2){\rnode{Sml1P}{\; 0.01\%}}
\ncline[arrows=->]{Sml1P}{Sml1}
\rput(2,1){\rnode{Sml2P}{\; 0.01\%}}
\ncline[arrows=->]{Sml2P}{Sml2}
\rput(2,-2){\rnode{Sml3P}{\; 0.01\%}}
\ncline[arrows=->]{Sml3P}{Sml3}
\rput{90}(2,-0.5){$\bullet\bullet\bullet$}

\rput[l](5.4,0.5){$d(\qquad)=1$}
\rput[l](5.4,-0.5){$d(\qquad)=1-\eps$}
\psline(6,0.5)(6.6,0.5)
\psline[linestyle=dashed](6,-0.5)(6.6,-0.5)

\endpspicture

\caption{A class of instances for general metric spaces in which \Social{\DIST}{\CPROB}{\CPROB} approaches $\frac{3}{2}$.}
\label{fig:metric_worst}
\end{figure}

This way, all voters/candidates in the set $\{1,\ldots,\NUMCAND\}$
prefer each other over the voter/candidate $0$. 
Therefore, even though candidate $0$ is socially optimal 
(with a cost $\Cost{0} = \half + O(\epsilon))$,
he loses to any other candidate in the election;
the other candidates' costs are 
$\Cost{i} = 1 - O(1/\NUMCAND) - O(\epsilon)$.

With probability $\half - O(\epsilon)$, an election occurs between
candidate 0 and some other candidate $i > 0$, resulting in distortion
$2 - O(\epsilon) - O(1/\NUMCAND)$.
In the other cases (two candidates from 0, or two candidates $i, j >
0$), the distortion is at least 1.
Hence, the overall expected distortion is at least
$(\half - O(\epsilon)) \cdot (2 - O(\epsilon) - O(1/\NUMCAND)) 
+ \half \cdot 1 = \frac{3}{2} - O(\epsilon) - O(1/\NUMCAND)$.
As $\eps \to 0$ and $\NUMCAND \to \infty$, the distortion approaches
$\frac{3}{2}$.

For the upper bound, let $\delta = \frac{1}{326}$ and consider the following two cases.
If all pairwise elections have distortion at most $3-\delta$,
then Lemma~\ref{lem:diff-with-cap} implies that 
the overall expected distortion $\Social{\DIST}{\CPROB}{\CPROB}$ is at
most $2-\delta/2 = 2-\frac{1}{652}$. If some pair of candidates has distortion at least $3-\delta$, then Lemma~\ref{lem:two-minus-eps} implies that the overall expected distortion is at most $\frac{3}{2} + 9 \sqrt{\delta} \leq 2-\frac{1}{652}$. Together, these two cases complete the proof of the theorem.
\end{extraproof}

As mentioned above, the key insight in the proof of
Lemma~\ref{lem:two-minus-eps} is that when a pair of candidates has
$\Ratio{x}{y} \geq 3 - \delta$, \emph{nearly} half the probability
mass must be concentrated \emph{very close to} the socially optimal
candidate, and \emph{most} of the remaining candidates must be
\emph{nearly} equidistant to the two candidates.
Trading off these four sources of approximation makes the proof of
the lemma fairly complex.
To illustrate the key ideas more cleanly, we therefore begin by
proving the following special case of Lemma~\ref{lem:two-minus-eps}
with $\delta=0$.

\begin{lemma} \label{lem:given-half-point}
Let $(\DIST, \CPROB, \CPROB)$ be an instance.
If there exists a pair of candidates $x,y$ with 
$\frac{\Cost{\Winner{x}{y}}}{\Cost{\Opt{x}{y}}} = 3$, 
then $\Social{\DIST}{\CPROB}{\CPROB} = 1.5$.
\end{lemma}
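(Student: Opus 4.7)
The plan is to use the equality case of Lemma~\ref{lem:cost-bound} to extract a rigid structural description of $(\DIST, \CPROB)$, and then to bound $\Social{\DIST}{\CPROB}{\CPROB}$ by splitting into cases according to whether the two i.i.d.~draws include $y$.

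Write $x = \Winner{x}{y}$, $y = \Opt{x}{y}$, set $r = \Dist{x}{y}/2$, and partition the voter locations into $A = \{j : \Dist{x}{j} \le \Dist{y}{j}\}$ and $B = \{j : \Dist{x}{j} > \Dist{y}{j}\}$. I would walk through the three inequalities in the proof of Lemma~\ref{lem:cost-bound} and read off what each tight step demands: tightness of the swap between $A$ and $B$ forces $\CProb{A} = \CProb{B} = 1/2$; tightness for $j \in A$ with positive mass forces both $\Dist{x}{j} = \Dist{y}{j}$ and $\Dist{x}{y} = 2\Dist{y}{j}$, hence $\Dist{x}{j} = \Dist{y}{j} = r$; and tightness for $j \in B$ with positive mass forces $\Dist{y}{j} = 0$. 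Assembling, the support of $\CPROB$ consists of the single point $y$ carrying mass exactly $1/2$, together with a set of points sharing the remaining mass $1/2$, each at distance $r$ from both $x$ and $y$.

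With this structure, every election involving $y$ has distortion $1$. Indeed, for any $z \neq y$ in the support of $\CPROB$, we have $\Dist{z}{y} = r$, so
\[
\Cost{z} \;\ge\; \CProb{y} \cdot \Dist{z}{y} \;=\; \tfrac{1}{2} \cdot r \;=\; \Cost{y},
\]
and $y$ is therefore socially optimal among $\{y,z\}$. In the election of $y$ versus $z$, the voters located at $y$ (mass $1/2$) are strictly closer to $y$ than to $z$, so $y$ wins; hence $\Ratio{y}{z} = 1$ for every $z$ in the support. The probability that $y$ is drawn in at least one of the two i.i.d.~samples is $1 - (1-\CProb{y})^2 = \tfrac{3}{4}$, and these pairs contribute exactly $\tfrac{3}{4}$ to $\Social{\DIST}{\CPROB}{\CPROB}$; the remaining $\tfrac{1}{4}$ of pairs have neither candidate equal to $y$, and each contributes at most $3$ by Lemma~\ref{lem:cost-bound}. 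Summing yields
\[
\Social{\DIST}{\CPROB}{\CPROB} \;\le\; \tfrac{3}{4} \cdot 1 + \tfrac{1}{4} \cdot 3 \;=\; \tfrac{3}{2}.
\]

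The main obstacle I expect is the bookkeeping in the structural step: the inequality chain in the proof of Lemma~\ref{lem:cost-bound} produces several nonnegative slack terms, and I must argue that each vanishes individually in order to conclude both that $\CProb{y}$ equals \emph{exactly} $1/2$ (not merely at least $1/2$) and that every other support point lies on the equidistant midsphere between $x$ and $y$. Once that structural claim is in hand, the remainder is a one-line aggregation.
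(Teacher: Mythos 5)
Your structural extraction from the tight inequality chain of Lemma~\ref{lem:cost-bound} is correct and matches the paper: exactly half the mass collapses to the socially optimal candidate (your $y$), the rest lies on the equidistant midsphere at distance $r$, and $\Cost{y} = r/2$, $\Cost{x} = 3r/2$. The gap is in the aggregation step, specifically the claim ``the voters located at $y$ (mass $1/2$) are strictly closer to $y$ than to $z$, so $y$ wins.'' This does not follow. The tight structure forces $\CProb{y}$ to be \emph{exactly} $1/2$, so in the election of $y$ versus $z$, the mass strictly preferring $y$ is exactly $1/2$, while the remaining mass $1/2$ could all strictly prefer $z$. Under the paper's definition ($i$ wins iff the mass with $\Dist{i}{j} \le \Dist{i'}{j}$ is $\ge 1/2$), both candidates can satisfy the winning criterion, and the outcome is a tie; the paper acknowledges this explicitly in the footnote attached to this very derivation (``this extreme example does rely on tie breaking''). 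Consequently you may not assume $\Ratio{y}{z} = 1$.

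This matters quantitatively. If the tie is broken against $y$, one can show $\Ratio{y}{z} \le 2$ (since $z$ winning forces every point of the midsphere within distance $r$ of $z$, giving $\Cost{z} \le r$), but plugging $\le 2$ rather than $=1$ into your decomposition only yields $\tfrac14 + \tfrac12\cdot 2 + \tfrac14\cdot 3 = 2$, not $\tfrac32$. The paper avoids this by exploiting correlations among the three relevant pairwise ratios: it bounds $\Delta_{i,j} = \Ratio{i}{x} + \Ratio{j}{x} + \Ratio{i}{j} \le 5$ (in its notation) via a case analysis that explicitly covers the cases where $i$ or $j$ beats the high-mass point --- the very tie-breaking outcomes your proof rules out by assumption. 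You need an analogous argument: either prove a joint bound on the triple of ratios as the paper does, or justify a specific tie-breaking convention and explain why the lemma's conclusion is insensitive to it.
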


As before, we let $\CProb{A} = \sum_{i \in A} \CProb{i}$ denote the
total probability mass in $A$.
In addition, throughout this section,
$\CPROB_{A}$ is the conditional candidate/voter distribution given
that candidate $i$ is drawn from $A$;
that is, $(\CPROB_{A})_i = \CProb{i} / \CProb{A}$.
We use $\Dist{i}{A}$ to denote the \emph{average} distance from $i$ to
the set $A$, i.e.,
$\Dist{i}{A} = \Expect[j \sim \CPROB_A]{\Dist{i}{j}}$.

\subsection{Proof of Lemma~\ref{lem:given-half-point}}
Assume that $y = \Winner{x}{y}$ and $x = \Opt{x}{y}$.
We assume without loss of generality that $\Dist{x}{y} = 2$.
The fact that $\Cost{y} = 3 \Cost{x}$ implies very stringent
conditions on the instance: we will begin by showing that
half of the probability mass must be at $x$, $x$ is socially optimal,
and all other locations are at distance\footnote{In a sense, this
  extreme example \emph{does} rely on tie breaking. Since we are
  proving an \emph{upper} bound here, this is not a concern.}
1 from $x$ and $y$.

Let $Y$ be the set of voters preferring $y$ over $x$, and
$X = \Compl{Y}$ the set of voters preferring $x$ over $y$.
Then, we can bound
\begin{align*}
\Cost{y} 
& = \CProb{Y} \Dist{y}{Y} + \CProb{X} \Dist{y}{X} \\
& \stackrel{\bigtriangleup-\text{inequality}}{\leq} 
\CProb{Y} \Dist{y}{Y} + \CProb{X} (\Dist{y}{x} + \Dist{x}{X}) \\
& \stackrel{y \text{ beats } x}{\leq} 
\CProb{Y} (\Dist{y}{Y} + \Dist{y}{x}) + \CProb{X} \Dist{x}{X} \\
& \stackrel{\bigtriangleup-\text{inequality}}{\leq} 
\CProb{Y} (\Dist{y}{Y} + \Dist{y}{Y} + \Dist{x}{Y})
   + \CProb{X} \Dist{x}{X}\\
& \leq
3(\CProb{Y} \Dist{x}{Y} + \CProb{X} \Dist{x}{X})\\
& = 3\Cost{x}.
\end{align*}

Because $\Cost{y} = 3\Cost{x}$ by assumption, all of the inequalities
must be tight, which implies the following:
\begin{enumerate}
\item By the second (in)equality, $\CProb{Y} = \CProb{X} = \half$.
\item By the final (in)equality, $\Dist{x}{Y} = \Dist{y}{Y}$, so all
  points in $Y$ are equidistant from $x$ and $y$.
  Furthermore, because $\CProb{X} \Dist{x}{X} = 3 \CProb{X} \Dist{x}{X}$,
  we get $\Dist{x}{X} = 0$.
\item By the first (in)equality, $\Dist{y}{X} = \Dist{y}{x} + \Dist{x}{X}
  = 2$.
\item By the third (in)equality, $\Dist{y}{x} = \Dist{y}{Y} + \Dist{x}{Y}$,
so (because $\Dist{y}{Y} = \Dist{x}{Y}$ and by triangle
inequality), $\Dist{y}{i} = \Dist{x}{i} = 1$ for all $i \in Y$.
\end{enumerate}

Because $\Dist{x}{X} = 0$, we can write $\CProb{x} = \half$.
We then have that $\Cost{x} = \half$, and $\Cost{y} = \frac{3}{2}$.
Let $A$ denote the set of all candidates other than $x$.
The expected distortion is then
\begin{align*}
\Social{\DIST}{\CPROB}{\CPROB}
&= \CProb{x}^2 + 2 \CProb{x}\CProb{A} \expect{i \sim A}{\Ratio{i}{x}} + \CProb{A}^2 \expect{i,j \sim A}{\Ratio{i}{j}} \\
&= 1/4 + 1/4 \cdot \expect{i,j \sim A}{\Ratio{i}{x} + \Ratio{j}{x} + \Ratio{i}{j}}.
\end{align*}

Let $\Delta_{i,j} = \Ratio{i}{x} + \Ratio{j}{x} + \Ratio{i}{j}$.
We will show that $\Delta_{i,j} \le 5$ for all $i, j \in A$, 
and thus $\Social{\DIST}{\CPROB}{\CPROB} \le 1.5$.
The three key properties we exploit repeatedly are the following.

\begin{enumerate}
\item $x$ is socially optimal, i.e., $\Cost{i} \ge \Cost{x}$ for
  all $i \in A$.
  This is because $\Cost{x} = \half$, and
  for each $i \in A$, at least all voters at $x$ are at distance 1.
\item $\Cost{i} \le 3\Cost{x}$ for all $i \in A$. 
  This is because $\Dist{i}{A} \le \Dist{i}{x} + \Dist{x}{A} = 2$
  and $\Dist{i}{x} = 1$, giving a total cost of at most $\frac{3}{2}$.
\item If some $i \in A$ beats $x$, then $\Cost{i} \le 2\Cost{x}$. 
  This is because everyone in $A$ has to vote for $i$, implying that 
  $\Dist{i}{A} \le 1$, giving $\Cost{i} \leq 1$.
\end{enumerate}

Now fix some pair $i, j \in A$, and assume without loss of generality
that $i$ wins the election over $j$.
For each of the three elections that contribute to $\Delta_{i,j}$ ($i$
vs.~$x$, $j$ vs.~$x$, $i$ vs.~$j$), there is a term which is 
1 if the election chooses the socially better candidate
(e.g., $\Winner{i}{x} = \Opt{i}{x}$), and at most 3 otherwise.
Thus, if we ever had $\Delta_{i,j} > 5$, at least two of the three
elections would have to produce the socially worse candidate as a
winner, e.g., $\Winner{i}{x} \neq \Opt{i}{x}$.
We distinguish three possible cases.

\begin{enumerate}
\item If $x$ beats $j$, then $i$ must beat $x$ and (because we assumed
  $i$ to beat $j$) $j$ must have lower cost than $i$.
  Because $x$ is socially optimal (in particular having lower social
  cost than $j$), using that $\Cost{i} \leq 2\Cost{x}$, we have that
\[ \Delta_{i,j} = \Ratio{i}{x} + \Ratio{j}{x} + \Ratio{i}{j}
   = \frac{\Cost{i}}{\Cost{x}} + 1 + \frac{\Cost{i}}{\Cost{j}}
   \le 1 + \frac{\Cost{i}}{\Cost{x}} + \frac{\Cost{i}}{\Cost{x}} \le 1 + 2 + 2 = 5. \]

\item If $x$ beats $i$, then $j$ must beat $x$ and have lower cost
  than $i$. Then we obtain the expression.
\[ \Delta_{i,j} = \Ratio{i}{x} + \Ratio{j}{x} + \Ratio{i}{j} 
   = 1 + \frac{\Cost{j}}{\Cost{x}} + \frac{\Cost{i}}{\Cost{j}}. \]
Treating $\Cost{j}$ as a variable $t$, we have an expression of the
form $\frac{t}{\Cost{x}} + \frac{\Cost{i}}{t}$, which is convex and
hence maximized at an extreme point ($t=\Cost{i}$ or $t=\Cost{x}$), 
giving an upper bound of $1 + 1 + \frac{\Cost{i}}{\Cost{x}} \le 5$.

\item Finally, we have the case that both $i$ and $j$ beat $x$ 
(implying that $\Cost{i} \le 2\Cost{x}$ and $\Cost{j} \le 2\Cost{x}$).
If $i$ has lower social cost than $j$, we can bound
\[ \Delta_{i,j} = \Ratio{i}{x} + \Ratio{j}{x} + \Ratio{i}{j} \le 2 + 2 + 1 = 5. \]
Otherwise we have $\Cost{x} \le \Cost{j} < \Cost{i}$, and
obtain the expression
$\Delta_{i,j} = \frac{\Cost{i}}{\Cost{x}} + \frac{\Cost{j}}{\Cost{x}} + \frac{\Cost{i}}{\Cost{j}}$.
Again, the expression $\frac{\Cost{j}}{\Cost{x}} +
\frac{\Cost{i}}{\Cost{j}}$ is maximized at $\Cost{j} = \Cost{x}$ 
or $\Cost{j}=\Cost{i}$, in both cases giving us a bound of 
$\frac{\Cost{i}}{\Cost{x}} + 1 + \frac{\Cost{i}}{\Cost{x}} \le 5$. \QED
\end{enumerate}

\subsection{Proof of Lemma~\ref{lem:two-minus-eps}}
\label{pf:two-minus-eps}

The proof of Lemma~\ref{lem:two-minus-eps} follows the same roadmap 
as the proof of Lemma~\ref{lem:given-half-point},
except that we no longer have a point with probability mass 1/2.
Instead, close to half of the probability mass will be 
in a ball $B$ of small radius around $x$.
The three key properties used to bound $\Delta_{i,j}$ will then be
replaced with approximate (slightly inferior) versions.

For any pair of candidates $i, j$, we will be frequently using the
following upper bounds on \Cost{i}:
\begin{align}
\Cost{i} & \le \Cost{j} + \Dist{i}{j}, \label{eqn:triangle-costs}\\
\Cost{i} & \le \Cost{j} + \frac{\Dist{i}{j}}{2}
\quad \text{ whenever $i$ beats $j$.} \label{eqn:triangle-winner}
\end{align}
Inequality~\eqref{eqn:triangle-costs} is simply by the triangle inequality, 
while Inequality~\eqref{eqn:triangle-winner} also uses the fact that 
half of the voters are closer to $i$, and at most the remaining half
can contribute to the cost gap.

Let $(x,y)$ be the election maximizing $\Ratio{x}{y}$,
having $\Ratio{x}{y} = 3-\delta$.
Without loss of generality, assume that $y = \Winner{x}{y}$
and $\Dist{x}{y} = 2$.
Because $\Cost{y} = (3-\delta) \Cost{x}$ and 
$\Cost{y} \leq \Cost{x} + \frac{\Dist{x}{y}}{2} = \Cost{x} + 1$,
we obtain that 
\begin{align} \label{eqn:two-minus-eps-cost-x-ub}
\Cost{x} \le \frac{1}{2-\delta}.
\end{align}

As before, let $X$ be the set of voters closer to $x$ than to $y$, 
and $Y = \Compl{X}$ the set closer to $y$.
Then, $\CProb{X} \leq \half \leq \CProb{Y}$.
Following our intuition from the proof of
Lemma~\ref{lem:given-half-point}, 
we partition the points into three disjoint sets $A$, $B$ and $C$.
Specifically, we will choose (later) a parameter $\CProb{}$ close to 1/2.
As before, the set $A$ captures the points that are ``roughly equidistant''
between $x$ and $y$; specifically:
$A = \Set{i}{\Dist{i}{y} \le \Dist{i}{x} \le 1 + \Radius{A}} \subseteq Y$,
  where we will choose $\Radius{A}$ so that $\CProb{A} = \CProb{}$.
The set $B$ captures the points ``close to'' x:
$B = \Set{i}{\Dist{i}{x} \le \Radius{B}} \subseteq X$,
  where we choose $\Radius{B}$ so that $\CProb{B} = \CProb{}$.\footnote{
Notice that such $\Radius{A}, \Radius{B}$ exist without loss of
generality.
For if there were a radius $\rho$ such that --- say ---
$B^{\circ} = \Set{i}{\Dist{i}{x} < \rho}$ had
$\CProb{B^{\circ}} < \CProb{}$, while
$\overline{B} = \Set{i}{\Dist{i}{x} \leq \rho}$ had
$\CProb{\overline{B}} > \CProb{}$, 
we could split a point $i$ on the boundary
(that is, $i$ satisfies $\Dist{i}{x} = \rho$)
into two points, without affecting any outcomes for the instance.}
The set $C$ consists of the remaining points
  $C = \Compl{A \cup B}$.
($C$ may contain points from both $X$ and $Y$, and $\CProb{C} = 1-2\CProb{}$.)
Observe that the closer $\CProb{}$ is to $1/2$,
  the larger $\Radius{A}$ and $\Radius{B}$ will be,
  and we will have less control over where the points in $A$ and $B$
  are located.
Contrast this with the proof of Lemma~\ref{lem:given-half-point}, 
where the very stringent assumption of $\delta = 0$ allowed us to 
choose $\CProb{} = 1/2$ and still obtain $\Radius{A} = \Radius{B} = 0$.

We use the fact that $\Cost{x} \approx \half$ to derive that close to
half of the probability mass must be in $A$, and close to half in $B$.
To lower-bound the probability mass, notice that the cost of $x$ can
be lower-bounded term-by-term as follows:
\begin{enumerate}
\item All points in $B$ contribute cost at least 0.
\item All points in $X \setminus B$ contribute cost at least \Radius{B}.
\item All points in $A$ contribute cost at least 1.
\item All points in $Y \setminus A$ contribute cost at least $1+\Radius{A}$.
\end{enumerate}


We can thus lower bound \Cost{x} as follows:
\begin{align*}
\frac{1}{2-\delta} 
\geq \Cost{x} 
& \geq (\CProb{X} - \CProb{B}) \Radius{B} + \CProb{A} \cdot 1 
       + (\CProb{Y} - \CProb{A}) (1+\Radius{A})
\; = \; \CProb{X} \Radius{B} + \CProb{Y} (1+\Radius{A})
    - \CProb{B} \Radius{B} - \CProb{A} \Radius{A} \\
& \geq \half \Radius{B} + \half (1+\Radius{A}) 
    - \CProb{B} \Radius{B} - \CProb{A} \Radius{A}
\; = \; \half + (\half - \CProb{}) (\Radius{A} + \Radius{B}).
\end{align*}
This implies that $\Radius{A} + \Radius{B} \le \frac{\delta}{(2-\delta)(1-2\CProb{})}$.
In particular, notice that even while choosing 
the desired probability $\CProb{}$ very close to half
(e.g., $\CProb{} = 1/2 - O(\sqrt{\delta})$),
the cost bound still guarantees that such a \CProb{}
is achieved with small radii:
$\Radius{A} + \Radius{B} = O(\sqrt{\delta})$.
For notational convenience, we use $\Radius{} = \frac{\delta}{(2-\delta)(1-2\CProb{})}$
  to denote the upper bound for $\Radius{A} + \Radius{B}$,
  so $\Radius{B} \le \Radius{A} + \Radius{B} \le \Radius{}$.

\smallskip

The expected distortion can now be broken into terms based on the
three partitions $A, B, C$.
\begin{align*}
\Social{\DIST}{\CPROB}{\CPROB}
& = 2 \CProb{C} \left(\CProb{A}+\CProb{B}\right) 
           \Expect[i \sim \CPROB_C, j \sim \CPROB_{A \cup B}]{\Ratio{i}{j}} 
    + \CProb{C}^2 \cdot \Social{\DIST}{\CPROB_C}{\CPROB}
    + \CProb{B}^2 \cdot \Social{\DIST}{\CPROB_B}{\CPROB} \\
& \qquad + 2\CProb{A} \CProb{B} \Expect[i \sim \CPROB_A, j \sim \CPROB_B]{\Ratio{i}{j}} 
    + \CProb{A}^2 \cdot \Social{\DIST}{\CPROB_A}{\CPROB}
\end{align*}

We bound the terms in the sum separately.

\begin{itemize}
\item $\Expect[i \sim \CPROB_C, j \sim \CPROB_{A \cup B}]{\Ratio{i}{j}}
\leq 3-\delta$, simply because we assumed that the worst-case pairwise
distortion of any election was $3-\delta$.
\item The same bound of $3-\delta$ applies to
  \Social{\DIST}{\CPROB_C}{\CPROB}, for the same reason.


\item When both candidates $i,j$ are drawn from $B$,
assume that $i$ wins while $j$ has lower social cost.
We can use Inequalities~\eqref{eqn:triangle-winner} (for $i$ and $j$) and
\eqref{eqn:triangle-costs} (for $j$ and $x$, the latter having cost at
least $\half$) to bound
\begin{align*}
\frac{\Cost{i}}{\Cost{j}} 
& \le \frac{\Cost{j}+\frac{\Dist{i}{j}}{2}}{\Cost{j}}
\; \le \; 1 + \frac{\Radius{B}}{\Cost{j}} 
\; \le \; 1 + \frac{\Radius{B}}{1/2 - \Radius{B}}
\; = \; \frac{1}{1-2\Radius{B}}.
\end{align*}

We apply Lemma~\ref{lem:diff-with-cap}, and obtain that
\begin{equation*}
\Social{\DIST}{\CPROB_B}{\CPROB}
 \le \frac{1}{2} \left(1 + \frac{1}{1-2\Radius{B}} \right) 
\; = \; \frac{1-\Radius{B}}{1 - 2\Radius{B}}
\; \le \; \frac{1-\Radius{}}{1 - 2\Radius{}}.
\end{equation*}

\item The most difficult term to bound is
\begin{equation*}
2\CProb{A} \CProb{B} \Expect[i \sim \CPROB_A, j \sim \CPROB_B]{\Ratio{i}{j}} 
 + \CProb{A}^2 \cdot \Expect[i, j \sim \CPROB_A]{\Ratio{i}{j}}
\; = \;
\CProb{}^2 \cdot \Expect[i,j \sim \CPROB_A, b \sim \CPROB_B]{\Ratio{i}{b} +
\Ratio{j}{b} + \Ratio{i}{j}}.
\end{equation*}

Similar to the proof of Lemma~\ref{lem:given-half-point},
we define $\Delta_{i,j,b} = \Ratio{i}{b} + \Ratio{j}{b} + \Ratio{i}{j}$,
and upper-bound $\Delta_{i,j,b}$ for all $i, j \in A$ and $b \in B$ by
a quantity which tends to $5$ as $p \to 1/2$ and $\rho \to 0$.
The proof of the following lemma involves an intricate case analysis, and
is relegated to the end of this section. 
\begin{lemma}\label{lem:delta_ijb}
  $\Delta_{i,j,b} \leq 1 + 2 \cdot \RatioIwinsOverJ$ for all $i,j \in A$ and $b \in B$.
\end{lemma}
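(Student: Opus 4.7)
The plan is to mirror the proof of Lemma~\ref{lem:given-half-point}, replacing the exact optimum $x$ (which had probability mass exactly $\half$ and distance exactly $1$ to every $i \in A$) by a generic $b \in B$, and relaxing the three structural properties used there to approximate versions. First I would establish three uniform cost-ratio bounds for arbitrary $i \in A$ and $b \in B$: (a) a ``near-optimality'' bound $\Cost{b}/\Cost{i} \le \RatioBoverI$, obtained by combining $\Cost{b} \le \Cost{x} + \Radius{B}$ with the upper bound on $\Cost{x}$ from \eqref{eqn:two-minus-eps-cost-x-ub} and the lower bound $\Cost{i} \ge \CProb{B}(1-\Radius{B}) \ge \CProb{}(1-\Radius{})$ coming from the fact that every voter in $B$ is at distance at least $1-\Radius{B}$ from $i$; (b) a general bound $\Cost{i}/\Cost{b} \le \RatioIoverB$, obtained from the triangle inequality $\Cost{i} \le \Cost{b} + \Dist{i}{b}$ combined with the lower bound $\Cost{b} \ge \half - \Radius{}$ (using $\Cost{x} \ge \half$ established earlier); and (c) a refined bound $\Cost{i}/\Cost{b} \le \RatioIwinsOverB$ whenever $i$ beats $b$, obtained by replacing $\Dist{i}{b}$ by $\Dist{i}{b}/2$ in (b) via Inequality~\eqref{eqn:triangle-winner}. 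In the limit $\delta\to 0$ (so $\Radius{} \to 0$ and $\CProb{} \to \half$), these three bounds specialize to $1$, $3$, and $2$, the exact ratios used in Lemma~\ref{lem:given-half-point}.

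Next, fix $i, j \in A$ and $b \in B$ and assume without loss of generality that $i$ beats $j$. As in Lemma~\ref{lem:given-half-point}, unless at least two of the three elections among $\{i,j,b\}$ elect the socially worse candidate, at least one ratio equals $1$ and the other two are controlled by (b), comfortably below the target. Otherwise we enter one of three nontrivial cases paralleling the ones in Lemma~\ref{lem:given-half-point}. \textbf{Case 1:} $b$ beats $j$ and $i$ beats $b$; the three wrong elections force $\Cost{j} < \Cost{b} < \Cost{i}$, so $\Delta_{i,j,b} = \Cost{i}/\Cost{b} + \Cost{b}/\Cost{j} + \Cost{i}/\Cost{j} \le \RatioIwinsOverB + \RatioBoverI + \RatioIwinsOverB \cdot \RatioBoverI$, which is at most $1 + 2\,\RatioIwinsOverB \cdot \RatioBoverI$ by the elementary inequality $(\RatioIwinsOverB - 1)(\RatioBoverI - 1) \ge 0$. \textbf{Case 2:} $b$ beats $i$ and $j$ beats $b$; the resulting expression $\Cost{b}/\Cost{i} + \Cost{j}/\Cost{b} + \Cost{i}/\Cost{j}$ is convex in $\Cost{j}$ and hence maximized at $\Cost{j} = \Cost{b}$ or $\Cost{j} = \Cost{i}$, both of which collapse to a weaker version of Case~1. \textbf{Case 3:} both $i$ and $j$ beat $b$; if $\Cost{i} \le \Cost{j}$ the $(i,j)$ ratio is $1$ and the other two each at most $\RatioIwinsOverB$, and otherwise $\Cost{b} < \Cost{j} < \Cost{i}$ and the same convexity-in-$\Cost{j}$ trick reduces to Case~1.

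The main obstacle is that, unlike the crisp constants $1, 2, 3$ in Lemma~\ref{lem:given-half-point}, each of the three ratio bounds here carries a perturbative dependence on $\CProb{}$ and $\Radius{}$, and one must verify that every subcase and every convexity-reduced extreme remains dominated by $1 + 2\,\RatioIwinsOverJ = 1 + 2\,\RatioIwinsOverB \cdot \RatioBoverI$. Case~1 is essentially tight in the limit $\delta \to 0$, and the single piece of algebra that makes the sum of three separate bounds collapse into the clean product form is precisely $(\RatioIwinsOverB - 1)(\RatioBoverI - 1) \ge 0$, which follows from $\CProb{} \le \half$ and $\Radius{} \ge 0$. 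Once Lemma~\ref{lem:delta_ijb} is in hand, the $A\times B$ and $A\times A$ contributions to $\Social{\DIST}{\CPROB}{\CPROB}$ are controlled, and the remaining tuning of $\CProb{}$ against $\Radius{}$ completes the proof of Lemma~\ref{lem:two-minus-eps}.
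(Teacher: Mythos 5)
Your proposal follows the same roadmap as the paper's proof: establish three approximate cost-ratio bounds (near-optimality of $b$, a general $3$-like bound on $\Cost{i}/\Cost{b}$, and a $2$-like bound when $i$ beats $b$), then case-split on the election outcomes among $\{i,j,b\}$. The observation that $(\RatioIwinsOverB - 1)(\RatioBoverI - 1) \ge 0$ cleanly yields the product form $1 + 2 \cdot \RatioIwinsOverJ$ is a nice reformulation of the paper's convexity argument and is the right algebraic heart of the proof.

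However, your case enumeration has a genuine gap. You claim that, once at least two of the three pairwise elections pick the socially worse candidate, the situation falls into ``one of three nontrivial cases paralleling the ones in Lemma~\ref{lem:given-half-point}.'' But that parallel relied crucially on $x$ being \emph{exactly} socially optimal: in Lemma~\ref{lem:given-half-point}, if $x$ beats both $i$ and $j$, both of those elections are automatically correct and we fall back into the trivial case. Here $b$ is only \emph{approximately} optimal, so it is possible that $b$ beats both $i$ and $j$ while $\Cost{b} > \Cost{i}$ and $\Cost{b} > \Cost{j}$ — i.e., both elections are wrong, and this case is not covered by any of your three. The paper handles this as a separate case (both $i$ and $j$ lose to $b$), bounding it by $2 \cdot \RatioBoverI + 3 - \delta$, and then checks that this is dominated by the final expression. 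A second, milder issue: within each of your cases you assert a single forced cost ordering (``the three wrong elections force $\Cost{j} < \Cost{b} < \Cost{i}$''), which only holds when \emph{all three} elections are wrong; the hypothesis is only that \emph{at least two} are wrong, so each of your cases actually contains two or three distinct cost orderings (as the paper's subcases (a), (b), (c) reflect), and each needs its own check. Your ``convexity-reduced extremes'' remark suggests you were aware more verification was needed, but these are concrete missing sub-cases rather than routine algebra, and the missing ``$b$ beats both'' case in particular does not reduce to anything you wrote.
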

\end{itemize}

Substituting all the upper bounds into the expected distortion, 
we obtain that
\begin{align*}
\Social{\DIST}{\CPROB}{\CPROB}
& \le 2 \CProb{C} \cdot (3-\delta)
    + \CProb{B}^2 \cdot \frac{1-\Radius{}}{1 - 2\Radius{}}
    + \CProb{}^2 \cdot \max_{i,j,b} \Delta_{i,j,b} \\
& \leq 2 (1-2\CProb{}) \cdot (3-\delta)
    + \CProb{}^2 \cdot \frac{1-\Radius{}}{1 - 2\Radius{}}
    + \CProb{}^2 \left( 1 + 2 \cdot \RatioIwinsOverJ \right)
\end{align*}

Substituting $\CProb{} = \frac{1-\sqrt{\delta}}{2}$ (which may not be optimal),
we get $\Radius{} = \frac{\sqrt{\delta}}{2-\delta}$,
and a tedious manual calculation\footnote{or some help from Mathematica}
using the observation that
$2-\sqrt{\delta}-\delta = (1-\sqrt{\delta})(2+\sqrt{\delta})$ gives an upper
bound of
\begin{align*}
\Social{\DIST}{\CPROB}{\CPROB}
& \leq \frac{48 + 196 \delta^{0.5} - 348 \delta - 287 \delta^{1.5} + 275 \delta^{2} + 193 \delta^{2.5} - 39 \delta^{3} - 46 \delta^{3.5} - 8 \delta^{4}}{32 (1 + \delta^{0.5}/2)^2 (1 - \delta^{0.5}) (1 - \delta^{0.5} - \delta/2)}.
\end{align*}
Dropping dominated terms (negative in the numerator, positive in the
  denominator), this expression can be upper-bounded by
\begin{align*}
\Social{\DIST}{\CPROB}{\CPROB}
& \leq \frac{48 + 196 \sqrt{\delta}}{32(1+\sqrt{\delta}/2)^2(1-2\sqrt{\delta}+\delta/2)}.
\end{align*}

Finally, using the upper bound $\delta \leq \frac{1}{100}$,
we obtain that
\begin{align*}
\Social{\DIST}{\CPROB}{\CPROB}
& \leq (\frac{3}{2} + \frac{49}{8} \sqrt{\delta}) / (1-\frac{9}{8} \sqrt{\delta})
\; \leq \; 
(\frac{3}{2} + \frac{49}{8} \sqrt{\delta}) \cdot (1+\frac{90}{71} \sqrt{\delta})
\; \leq \; \frac{3}{2} + 9 \sqrt{\delta}.
\end{align*}

This completes the proof of Lemma~\ref{lem:two-minus-eps}.

\begin{extraproof}{Lemma~\ref{lem:delta_ijb}}
Note that, for all $i \in A$ and $b \in B$,
  we have
\[ 
1 - \Radius{B}
\; \le \; \Dist{i}{x} - \Dist{x}{b} 
\; \le \; \Dist{i}{b} 
\; \le \; \Dist{i}{x} + \Dist{x}{b} 
\; \le \; 1 + \Radius{A} + \Radius{B}.
\]

In the proof of Lemma~\ref{lem:given-half-point}, the three key
properties were that (1) $x$ was socially optimal, 
(2) any $i \in A$ was at most thrice worse than $x$, and
(3) if $i \in A$ beat $x$ in a pairwise election, then it was at most
twice worse than $x$.
The relaxed versions of these key properties are the following:

\begin{enumerate}
\item Every $b \in B$ is close to socially optimal: 
For all $i \in A$ and $b \in B$,

\begin{align*}
\frac{\Cost{b}}{\Cost{i}} 
& = \frac{\CProb{A}\Dist{b}{A}+\CProb{B}\Dist{b}{B}+\CProb{C}\Dist{b}{C}}{\CProb{A}\Dist{i}{A}+\CProb{B}\Dist{i}{B}+\CProb{C}\Dist{i}{C}}
\; \le \; \frac{(1-\CProb{B})\Dist{i}{b} + \CProb{B}\Dist{b}{B}}{\CProb{B}\Dist{i}{B}}\\
& \le \frac{(1-\CProb{B})(1+\Radius{A}+\Radius{B}) + \CProb{B} 2 \Radius{B}}{\CProb{B}(1-\Radius{B})}
\; \le \; \frac{(1-\CProb{})(1+\Radius{})+2\CProb{}\Radius{}}{\CProb{}(1-\Radius{})}
\; = \; \frac{1-\CProb{}+\Radius{}+\CProb{}\Radius{}}{\CProb{}(1-\Radius{})}.
\end{align*}

The first inequality is obtained by 
bounding $\Dist{b}{A} \leq \Dist{b}{i} + \Dist{i}{A}$
and $\Dist{b}{C} \leq \Dist{b}{i} + \Dist{i}{C}$,
then subtracting $\Dist{i}{A} \CProb{A} + \Dist{i}{C} \CProb{C}$ from
both the numerator and denominator.
The next step uses that 
$1 - \Radius{B} \le \Dist{i}{b} \le 1 + \Radius{A} + \Radius{B}$.
This ratio is at least 1 and approaches 1 as
$\Radius{} \to 0$ and $\CProb{} \to \half$.

\item For all $i \in A$ and $b \in B$ 
(regardless of who wins the pairwise election between them),
\[ 
\frac{\Cost{i}}{\Cost{b}} 
\; \le \; \frac{\Cost{b}+\Dist{i}{b}}{\Cost{b}} 
\; \le \; 1 + \frac{1+\Radius{A}+\Radius{B}}{\Cost{b}} 
\; \le \; 1 + \frac{1+\Radius{}}{\half(1-\Radius{})}
\; = \; \frac{3+\Radius{}}{1-\Radius{}}.
\]
This ratio is at least 3 and approaches 3 as $\Radius{} \to 0$.

\item For any $i \in A$ that wins the pairwise election against $b \in B$,
\[ 
\frac{\Cost{i}}{\Cost{b}} 
\; \le \; \frac{\Cost{b}+\frac{\Dist{i}{b}}{2}}{\Cost{b}} 
\; \le \; 1 + \frac{\frac{1}{2}(1+\Radius{A}+\Radius{B})}{\Cost{b}} 
\; \le \; 1 + \frac{\frac{1}{2}(1+\Radius{})}{\half(1-\Radius{})}
\; = \; \frac{2}{1-\Radius{}}.
\]
This ratio is at least 2 and approaches 2 as $\Radius{} \to 0$.
\end{enumerate}

We now fix $i, j \in A$ and $b \in B$, and upper-bound $\Delta_{i,j,b}$
through a detailed case analysis based on who wins (and is socially
better) in the three elections $(i,j), (i,b), (j,b)$.
Without loss of generality, we assume that $i$ wins the election
against $j$.
Throughout the analysis, as in the proof of
Lemma~\ref{lem:given-half-point},  
we use repeatedly that $\frac{t}{\Cost{b}} + \frac{\Cost{i}}{t}$
is a convex function of $t$, and in particular is maximized at
$t = \Cost{i}$ or $t = \Cost{b}$.

\begin{enumerate}
\item If the socially better candidate wins in at least two of the three
elections, then $\Delta_{i,j,b} \le 1 + 1 + 3-\delta = 5-\delta$,
because the third election can have distortion at most $3-\delta$.

\item If both $i$ and $j$ lose to $b$,
\[ 
\Delta_{i,j,b} 
\; \le \; 2 \left(\max_{i\in A, b\in B} \frac{\Cost{b}}{\Cost{i}} \right) 
+ 3 - \delta
\; \leq \; 
2 \cdot \RatioBoverI
+ 3 - \delta.
\]

\item If both $i$ and $j$ beat $b$, then we obtain
\[ 
\Delta_{i,j,b} 
\; = \; \max \{\frac{\Cost{i}}{\Cost{b}}, 1\} 
      + \max \{\frac{\Cost{j}}{\Cost{b}}, 1\} 
      + \max \{\frac{\Cost{i}}{\Cost{j}}, 1\}. 
\]
Because we are not in the first case, at most one of the three maxima
can be 1. There are three orderings of the social costs 
$\Cost{b}, \Cost{i}, \Cost{j}$ which are consistent with these
assumptions:
\begin{enumerate}
\item If $\Cost{b} \leq \Cost{i} \leq \Cost{j}$, then
\[
\Delta_{i,j,b} 
\; = \; \frac{\Cost{i}}{\Cost{b}} + \frac{\Cost{j}}{\Cost{b}} + 1
\; \leq \; 1 + 2 \cdot \RatioIwinsOverB.
\]
\item If $\Cost{b} \leq \Cost{j} \leq \Cost{i}$, then
\[
\Delta_{i,j,b} 
\; = \; \frac{\Cost{i}}{\Cost{b}} + \frac{\Cost{j}}{\Cost{b}} + 
\frac{\Cost{i}}{\Cost{j}}
\; \leq \; \frac{\Cost{i}}{\Cost{b}} + 1 + \frac{\Cost{i}}{\Cost{b}}
\; \leq \; 1 + 2 \cdot \RatioIwinsOverB.
\]
\item If $\Cost{j} \leq \Cost{b} \leq \Cost{i}$, then
\begin{align*}
\Delta_{i,j,b} 
& = \frac{\Cost{i}}{\Cost{b}} + 1 + \frac{\Cost{i}}{\Cost{j}}
\; \leq \; 1 + \left(\max_{i\in A, b \in B, i \text{ beats } b} 
                    \frac{\Cost{i}}{\Cost{b}} \right) 
      \left(1 + \max_{j\in A, b\in B} \frac{\Cost{b}}{\Cost{j}} \right) \\
& \leq 1 + \RatioIwinsOverB \cdot \left(1 + \RatioBoverI \right).
\end{align*}
\end{enumerate}
  
\item If $i$ beats $b$ and $j$ loses to $b$, then
\[ 
\Delta_{i,j,b} 
\; = \; \max \{\frac{\Cost{i}}{\Cost{b}}, 1\} 
      + \max \{\frac{\Cost{b}}{\Cost{j}}, 1\} 
      + \max \{\frac{\Cost{i}}{\Cost{j}}, 1\}. 
\]

Again, because at least two of the three pairwise elections result in
the socially worse candidate winning, we have only three cost
orderings consistent with the outcome:
\begin{enumerate}
\item If $\Cost{j} \le \Cost{i} \le \Cost{b}$, then
\[
\Delta_{i,j,b}
\; = \; 1 + \frac{\Cost{b}}{\Cost{j}} + \frac{\Cost{i}}{\Cost{j}}
\; \leq \; 
1 + \RatioBoverI + 3-\delta.
\]

\item If $\Cost{b} \le \Cost{j} \le \Cost{i}$, then
\[
\Delta_{i,j,b}
\; = \; \frac{\Cost{i}}{\Cost{b}} + 1 + \frac{\Cost{i}}{\Cost{j}}
\; \leq \;
1 + \RatioIwinsOverB \cdot \left(1 + \RatioBoverI \right),
\]
as in Case 3(c).

\item If $\Cost{j} \le \Cost{b} \le \Cost{i}$, then
\begin{align*} 
\Delta_{i,j,b} 
  & = \frac{\Cost{i}}{\Cost{b}} + \frac{\Cost{b}}{\Cost{j}} + \frac{\Cost{i}}{\Cost{j}}
\; \le \; 1 + 2 \cdot \frac{\Cost{i}}{\Cost{j}} \\
  & \le 1 + 2 \left( \max_{i \in A, b \in B, i \text{ beats } b} \frac{\Cost{i}}{\Cost{b}} 
                \cdot \max_{j \in A, b \in B} \frac{\Cost{b}}{\Cost{j}} \right)
\; \leq \; 1 + 2 \cdot \RatioIwinsOverJ,
\end{align*}
where the first inequality again used the convexity argument on
$\frac{\Cost{i}}{\Cost{b}} + \frac{\Cost{b}}{\Cost{j}}$.
\end{enumerate}

\item In the final case, $i$ loses to $b$ and $j$ beats $b$, resulting
  in a cycle in the election results. We now have
\[ 
\Delta_{i,j,b} 
\; = \; \max \{\frac{\Cost{b}}{\Cost{i}}, 1\} 
      + \max \{\frac{\Cost{j}}{\Cost{b}}, 1\} 
      + \max \{\frac{\Cost{i}}{\Cost{j}}, 1\}. 
\]
Again, we have three possible cost orderings consistent with the
assumption that at most one of the pairwise elections agrees with the
social costs:
\begin{enumerate}
\item If $\Cost{j} \le \Cost{i} \le \Cost{b}$, then
\[ 
\Delta_{i,j,b} 
\; = \; \frac{\Cost{b}}{\Cost{i}} + 1 + \frac{\Cost{i}}{\Cost{j}}
\; \leq \; 
1 + \RatioBoverI + 3-\delta.
\]
\item If $\Cost{i} \le \Cost{b} \le \Cost{j}$, then
\[ 
\Delta_{i,j,b} 
\; = \; \frac{\Cost{b}}{\Cost{i}} + \frac{\Cost{j}}{\Cost{b}} + 1
\; \leq \; 1 + \RatioBoverI + \RatioIwinsOverB.
\]
\item In the final case $\Cost{b} \le \Cost{j} \le \Cost{i}$, we again
  apply the convexity argument to bound
\[ 
\Delta_{i,j,b} 
\; = \; 1 + \frac{\Cost{j}}{\Cost{b}} + \frac{\Cost{i}}{\Cost{j}} 
\; \le \; 1 + 1 + \frac{\Cost{i}}{\Cost{b}} 
\; \le \; 2 + \RatioIoverB.
\]
\end{enumerate}
\end{enumerate}

Collecting all the upper bounds in all cases, we see that they are all
equal to (or immediately upper-bounded by) one of the following three
terms:
\[ \begin{cases}
2 \cdot \RatioBoverI + 3 - \delta
& \text{ for cases (1), (2), (4a), (5a)}\\
1 + 2 \cdot \RatioIwinsOverB
& \text{ for cases (3a), (3b), (5b), (5c)}\\
1 + 2 \cdot \RatioIwinsOverJ
& \text{ for cases (3c), (4b), (4c)}
\end{cases} \]

A somewhat tedious calculation shows that because
$\CProb{} \leq \half$, the expressions in the first and second cases
are always bounded by the expression in the third case.
This completes the proof of Lemma~\ref{lem:delta_ijb}.
\end{extraproof}


\section{Discussion and Open Questions}
\label{sec:conclusion}

We showed that under the simple model of two i.i.d.~candidates and a
majority election between them, government by the people is better for
the people if it is also of the people: 
there is a constant gap between the distortion caused by voting in the
case when $\VPROB = \CPROB$ vs. $\VPROB \neq \CPROB$.
For the case of the line, we pinned down the gap precisely, 
while for general metric spaces, we proved a small constant gap.

Our results can be construed as providing some mathematical
underpinnings for the benefits of \emph{lottocracy}.
\emph{Lottocracy} (also called \emph{sortition})
\cite{dowlen:sortition,guerrero:lottocracy,landemore:lottocracy}
refers to systems of government in which (some)
political officials are chosen through lotteries instead of
(or in addition to) elections.
Two of the arguments put forth in favor of lottocracy are:
(1) it is more inclusive \cite{landemore:lottocracy}, in the sense
that the office holders will be more representative of the population
as a whole and its different subgroups;
(2) it leads to more responsive government \cite{guerrero:lottocracy}:
because office holders are representative of the population, they will
respond more directly to the preferences of the population.
The definition of inclusiveness is very closely aligned with our
notion of candidates being ``of the people;''
it is sometimes justified by empirical and simulation studies giving
evidence that inclusive groups may be better at problem solving.
The notion of responsiveness is similar to our notion of government
being ``for the people;'' in this sense, our results could be --- with
some latitude --- rephrased as stating that inclusiveness may lead to
responsiveness. 

While most proponents of lottocracy argue in favor of filling offices
with randomly selected citizens, our analysis applies to a process
wherein voters do have a say, but the slate of candidates is random.
Allowing a vote between randomly selected candidates may in fact
address one of the main concerns about lottocracy, namely, the
competency of candidates \cite{guerrero:lottocracy,landemore:lottocracy}.
It simultaneously addresses a concern about democratic votes:
that the slate of candidates could be such that voters make a
societally suboptimal choice.
While the mathematical model presented here is far too simplistic to
provide reliable insights into the merits (or problems) of lottocracy
and its variants, 
it may serve as a point of departure for future more refined models.

In terms of more direct technical questions,
the most immediate open question is to obtain the maximum
expected distortion in general metric spaces.
We conjecture an upper bound of $3/2$.
Our conjecture is based on extensive computational experiments, and on
several partial results. In particular, we can show that the
distortion is upper-bounded by $3/2$ whenever the metric is uniform
(i.e., all voters/candidates are equidistant), or when there is a
location of the metric space that has half the
voters/candidates. Both properties seem to naturally arise in
worst-case constructions, although we are unable to prove at this
point that they are necessary for worst-case metrics.

Beyond the immediate open question, our work raises a number of other
directions for future work.
A first natural question is how the distortion depends on the metric
space. 
As we saw, the distortion for the line is $4-2\sqrt{2} < \frac{3}{2}$.
What is the distortion for $d$-dimensional Euclidean space?
Are there other natural metric spaces that are suitable models of
political or similar affiliation, and may be amenable to a detailed
analysis?

In this work, in order to isolate the issue of representativeness of
candidates, we focused on a majority election between two candidates.
When $k > 2$ candidates are running, vote aggregation becomes more
complex, and indeed, a large number of different voting rules have
been considered throughout history. 
The work of Anshelevich et
al.~\cite{anshelevich:bhardwaj:postl,anshelevich:postl:randomized}
analyzed the worst-case distortion of some of the most prevalent
voting rules. 
It would be interesting to examine the performance of these voting
rules under our model of candidates drawn from the voter population.
In particular, would such an analysis reveal a more fine-grained
stratification between some of the voting rules that perform equally
well (or poorly) under worst-case assumptions?

A further direction is to deviate from the extremes of worst-case
candidates or candidates drawn from the voter distribution.
How gracefully does the distortion degrade as the voter and candidate
distributions become more and more dissimilar?
Answering this question first requires a suitable definition of a
distance metric between probability distributions.
Such a definition will have to be ``Earthmover-like,'' yet also
``scale-invariant.''


\subsubsection*{Acknowledgments}
Yu Cheng is supported in part by Shang-Hua Teng's Simons Investigator Award. 
Shaddin Dughmi is supported by NSF CAREER award CCF-1350900 and NSF
Grant CCF-1423618. 
David Kempe is supported in part by NSF Grant CCF-1423618 and NSF
Grant IIS-1619458.
The authors would like to thank Elliot Anshelevich and Utkash Dubey
for useful conversations, and anonymous reviewers for helpful feedback.

\bibliographystyle{plain}
\bibliography{../bibliography/names,../bibliography/conferences,../bibliography/bibliography,../bibliography/voting}

\end{document}